\DeclareMathOperator*{\argmax}{arg\,max}
\DeclareMathOperator*{\argmin}{arg\,min}
\title{The SBP Algorithm for Maximizing Revenue in Online Dial-a-Ride} 
\titlerunning{}
\author{Ananya Christman, Nicholas Jaczko, Tianzhi Li, Scott Westvold, Xinyue Xu }{Middlebury VT 05753, USA }{achristman@middlebury.edu}{https://orcid.org/0000-0001-9445-1475}{}
\author{Christine Chung}{Connecticut College, New London CT 06320, USA }{cchung@conncoll.edu}{https://orcid.org/0000-0003-3580-9275}{}%
\authorrunning{A D Christman et. al}
\keywords{vehicle routing, online dial-a-ride, competitive algorithm, optimization}
\newcommand{\on}{\textsc{on}}
\newcommand{\opt}{\textsc{opt}}
\newcommand{\sbp}{\textsc{sbp}}
\newcommand{\sbpprime}{\textsc{sbp}'}
\newcommand{\sbpdb}{\textsc{sbp}''}
\newcommand{\sbpbar}{\overline{\textsc{sbp}}}
\newcommand{\optbarprime}{\overline{\textsc{opt}}}
\newcounter{comment}[section]
\begin{document}

\maketitle

\begin{abstract}
\emph{
In the Online-Dial-a-Ride Problem (OLDARP) a server travels through a metric space to serve requests for rides. We consider a variant where each request specifies a source, destination, release time, and revenue that is earned for serving the request. The goal is to maximize the total revenue earned within a given time limit. We  prove that no non-preemptive deterministic online algorithm for OLDARP can be guaranteed to earn more than twice the revenue earned by an optimal offline solution. We then investigate the  \textsc{segmented best path} ($\sbp$) algorithm of~\cite{atmos17} for the general case of weighted graphs. The previously-established lower and upper bounds for the competitive ratio of $\sbp$ are 4 and 6, respectively, under reasonable assumptions about the input instance. We eliminate the gap by proving that the competitive ratio is 5 (under the same reasonable assumptions). 
We also prove that when revenues are uniform, $\sbp$ has competitive ratio 4. 
 Finally, we provide a competitive analysis of $\sbp$ on complete bipartite graphs.}
\end{abstract}

\newpage

\section{ Introduction}
In the On-Line Dial-a-Ride Problem (OLDARP), a server travels in some metric space to serve requests for rides. Each request specifies a \textit{source}, which is the pick-up (or start) location of the ride, a \textit{destination}, which is the delivery (or end) location, and the \textit{release time} of the request, which is the earliest time the request may be served. Requests arrive over time; specifically, each arrives at its release time and the server must decide whether to serve the request and at what time, with the goal of meeting some optimality criterion. The server has a \textit{capacity} that specifies the maximum number of requests it can serve at any time.  Common optimality criteria include minimizing the total travel time (i.e. makespan) to satisfy all requests, minimizing the average completion time (i.e. latency),  or maximizing the number of served requests within a specified time limit.  In many variants \textit{preemption} is not allowed, so if the server begins to serve a request, it must do so until completion. On-Line Dial-a-Ride Problems have many practical applications in settings where a vehicle is dispatched to satisfy requests involving pick-up and delivery of people or goods. Important examples include ambulance routing, transportation for the elderly and disabled, taxi services including Ride-for-Hire systems (such as Uber and Lyft), and courier services.

We study a variation of OLDARP where in addition to the source, destination and release time, each request also has a priority and there is a time limit within which requests must be served. The server has unit capacity and the goal for the server is to serve requests within the time limit so as to maximize the total priority. A request's priority may simply represent the importance of serving the request in settings such as courier services. In more time-sensitive settings such as ambulance routing, the priority may represent the urgency of a request. In profit-based settings, such as taxi and ride-sharing services, a request's priority may represent the revenue earned from serving the request.  For the remainder of this paper, we will refer to the priority as ``revenue,'' and to this variant of the problem as ROLDARP. Note that if revenues are uniform the problem is equivalent to maximizing the number of served requests. 

\subsection{ Related work}

\label{related}
The Online Dial-a-Ride problem was introduced by Feuerstein and Stougie~\cite{stougie} and several variations of the problem have been studied since.  For a comprehensive survey on these and many other problems in the general area of \textit{vehicle routing} see~\cite{wagnersurvey} and~\cite{typologysurvey}. Feuerstein and Stougie studied the problem for two different objectives:  minimizing completion time and minimizing latency.  For minimizing completion time, they showed that any deterministic algorithm must have competitive ratio of at least 2 regardless of the server capacity. They presented algorithms for the cases of finite and infinite capacity with competitive ratios of 2.5 and 2, respectively. For minimizing latency, they proved that any algorithm must have a competitive ratio of at least 3 and presented a 15-competitive algorithm on the real line when the server has infinite capacity. Ascheuer et al.~\cite{ascheuer} studied OLDARP with multiple servers with the goal of minimizing completion time and presented a 2-competitive algorithm. 
More recently,  Birx and Disser~\cite{birx} studied OLDARP on the real line and presented a new upper bound of 2.94 for the \textsc{smartstart} algorithm~\cite{ascheuer}, which improves the previous bound of 3.41\cite{krumke00}. For OLDARP on the real line, Bjelde et al.~\cite{bjelde} present a preemptive algorithm with competitive ratio 2.41.

The Online Traveling Salesperson Problem (OLTSP), introduced by Ausiello et al.~\cite{ausiello} and also studied by Krumke~\cite{krumke}, is a special case of OLDARP where for each request the source and destination are the same location.  
There are many studies of variants of OLDARP and OLTSP~\cite{ausiello,jaillet-wagner,jawgal,krumke} that differ from the variant that we study which we omit here due to space limitations.

In this paper, we study OLDARP where each request has a revenue that is earned if the request is served and the goal is to maximize the total revenue earned within a specified time limit; the offline version of the problem was shown to be NP-hard in~\cite{atmos17}.  More recently, it was shown that even the special case of the offline version with uniform revenues and uniform weights is NP-hard~\cite{twochain}.    
Christman and Forcier~\cite{christman1} 
presented a 2-competitive algorithm for OLDARP on graphs with uniform edge weights. Christman et al.~\cite{atmos17} showed that the
lack of a competitive algorithm for OLDARP with nonuniform edge weights is due to arbitrarily large edge weights alone, i.e. if edge weights may be arbitrarily large, then regardless of revenue values, no deterministic algorithm can be competitive.  They therefore considered graphs where edge weights are bounded by $T/f$, where $T$ is the time limit, for some $1 < f <T$, and gave a 6-competitive algorithm for this problem. Note that this is a natural subclass of inputs since in real-world dial-a-ride systems, drivers would be unlikely to spend a large fraction of their day moving to or serving a single request.


\subsection{ Our results}

In this work we begin with improved lower and upper bounds for the competitive ratio of the \\ $\textsc{segmented best path}$ ($\sbp$) algorithm that was presented in~\cite{atmos17}. 
In~\cite{atmos17}, it was shown that $\sbp$'s competitive ratio  has lower bound 4 and upper bound 6, provided that the edge weights are bounded by $T/f$ where $T$ is the time limit and $1 < f < T$, and that the revenue earned by the optimal offline solution in the last $2T/f$ time units is bounded by a constant. This assumption is imposed because, 
as we show in Lemma~\ref{lem:last2},
\textit{no} non-preememptive deterministic online algorithm can be guaranteed to earn this revenue.  We also show that \textit{no} non-preemptive deterministic online algorithm for OLDARP can be guaranteed to earn more than twice the revenue earned by an optimal offline solution in the first $T-2T/f$ time units. 
 We then close the gap between the upper and lower bounds of $\sbp$ by providing an instance where the lower bound is 5 (Section~\ref{sec:lb}) and a proof for an upper bound of 5 (Section~\ref{sec:ub}). We note that another interpretation of our result is that under a weakened-adversary model where $\opt$ has two fewer time segments available, while $\sbp$ has the full time limit $T$, $\sbp$ is 5-competitive. We then investigate the problem for uniform revenues (so the objective is to maximize the total number of requests served) and prove that $\sbp$ earns at least 1/4 the revenue of the optimal solution, minus an additive term linear in $f$, the number of time segments (Section~\ref{sec:uniformrev}). 
This variant is useful for settings where all requests have equal priorities such as not-for-profit services that provide transportation to elderly and disabled passengers and courier services where deliveries are not prioritized. 

We then consider the problem for complete bipartite graphs; for these graphs every source is from the left-hand side and every destination is from the right-hand side (Section \ref{sec:bipartite}). These graphs model the scenario where only a subset of locations may be source nodes and a disjoint subset may be destinations, e.g. in the delivery of goods from commercial warehouses only the warehouses may be sources and only customer locations may be destinations.  
We refer to this problem as ROLDARP-B. We first show that if edge weights are not bounded by a minimum value, then ROLDARP on general graphs reduces to ROLDARP-B. We therefore impose a minimum edge weight of $kT/f$ for some constant $k$ such that $0<k\leqslant 1$. We show that if revenues are uniform, $\sbp$ has competitive ratio $\lceil 1/k \rceil$. Finally, we show that if revenues are nonuniform $\sbp$ has competitive ratio $\lceil 1/k \rceil$, provided that the revenue earned by the optimal offline solution in the last $2T/f$ time units is bounded by a constant. 
(This assumption is justified by Lemma \ref{lem:last2} which says no deterministic algorithm can be guaranteed to earn any fraction of what is earned by the optimal solution in the last $2T/f$ time units.)

\section{ Preliminaries}

The Revenue-Online-Dial-a-Ride Problem (ROLDARP) is formally defined as follows. The input is an undirected complete graph $G = (V, E)$ where $V$ is the set of vertices (or nodes) and $E = \{(u, v) : u, v \in V, u \neq v \}$ is the set of edges.  For every edge $(u, v) \in E$, there is a weight $w_{u, v} > 0$, which represents the amount of time it takes to traverse $(u, v)$. We note that any simple, undirected, connected, weighted graph is allowed as input, with the simple pre-processing step of adding an edge wherever one is not present whose weight is the length of the shortest path between its two endpoints.  We further note that the input can be regarded as a metric space if the weights on the edges are expected to satisfy the triangle-inequality.  One node in the graph, $o$, is designated as the origin and is where the server is initially located (i.e. at time 0). The input also includes a time limit $T$ and a sequence of requests, $\sigma$, that are dynamically issued to the server. 

Each request is of the form $(s, d, t, p)$ where $s$ is the source node, $d$ is the destination, $t$ is the time the request is released, and $p$ is the revenue (or priority) earned by the server for serving the request. The server does not know about a request until its release time $t$.  To serve a request, the server must move from its current location $x$ to $s$, then from $s$ to $d$. The total time for serving the request is equal to the length of the path from $x$ to $d$ and the earliest time a request may be released is at $t=0$. For each request, the server must decide whether to serve the request and if so, at what time. A request may not be served earlier than its release time and at most one request may be served at any given time. Once the server decides to serve a request, it must do so until completion.  The goal for the server is to serve requests within the time limit so as to maximize the total earned revenue. 

The authors of \cite{atmos17} showed that if edge weights may be arbitrarily large then no deterministic algorithm can be competitive. They therefore considered graphs where edge weights are bounded by $T/f$ where $T$ is the time limit, for some $1 < f <T$, and presented the \textsc{segmented best path} ($\sbp$) algorithm for this problem (please refer to Algorithm \ref{sbpalg}).  

\RestyleAlgo{boxruled}
\begin{algorithm}\caption{Algorithm \textsc{Segmented Best Path (\sbp)}. Input is complete graph $G$ with time limit $T$ and maximum edge weight $T/f$.}
\label{sbpalg}
\begin{algorithmic} [1]
\STATE Let $t_1, t_2, \ldots t_{f}$ denote the time segments ending at times $T/f, 2T/f, \ldots, T$, resp.





\STATE Let $i = 1$.
\IF {$f$ is odd}
\STATE At $t_1$, do nothing. Increment $i=2$.
\ENDIF
\WHILE {$i < f$}
\STATE At the start of $t_i$, find the \textit{max-revenue-request-set}, $R$. 
\IF{$R$ is non-empty}
\STATE Move to the source location of the first request in $R$.
\STATE At the start of $t_{i+1}$, serve request-set $R$.
\ELSE 
\STATE Remain idle for $t_i$ and $t_{i+1}$
\ENDIF
\STATE Let $i=i+2$.
\ENDWHILE
\end{algorithmic}
\end{algorithm}

The algorithm $\sbp$ starts by splitting the total time $T$ into $f$ segments each of length $T/f$.
At the start of a time segment, the server determines the \textit{max-revenue-request-set}, i.e. the  maximum revenue set of requests that can be served within one time segment, and moves to the source of the first request in this set. During the next time segment, it serves the requests in this set.  It continues this way, alternating between moving to the source of first request in the max-revenue-request-set during one time segment, and serving this request-set in the next time segment. To find the max-revenue-request-set, the algorithm maintains a directed auxiliary graph, $G'$ to keep track of unserved requests (an edge between two vertices $u$,$v$ represents a request with source $u$ and destination $v$). It finds all paths of length at most $T/f$ between every pair of nodes in $G'$ and returns the path that yields the maximum total revenue (please refer to~\cite{atmos17} for full details). Finding all paths of length at most $T/f$ in $G'$ requires enumeration of all paths in $G'$ and the number of possible paths is exponential in the size of $G'$, which is determined directly by the number of outstanding requests in the current time segment.  However, in many real world settings, the size of $G'$ will be small relative to the size of $G$ and in settings where $T/f$ is small, the run time is further minimized. Therefore it should be feasible to execute the algorithm efficiently in many realistic settings. 

	It was observed in \cite{atmos17} that no deterministic online algorithm can be guaranteed to serve the requests served by \textsc{opt} during the last time segment and the authors proved that $\sbp$ is 6-competitive barring an additive factor equal to the revenue earned by $\opt$ during the last two time segments.  More formally, let $rev(\sbp (t_j))$ and 
	$rev(\opt(t_j))$ denote the revenue earned by $\sbp$ and $\opt$ respectively during the $j$-th time segment.  Then if $rev(\opt({t_f})) + rev(\opt({t_{f-1}})) \leq c$ for some constant $c$, then
$\sum_{j=1}^f rev(\opt({t_j})) \le 6 \sum_{j=1}^f rev(\sbp({t_j}))  + c$. 
It was also shown in \cite{atmos17} that as $T$ grows, the competitive ratio of $\sbp$ is at best
 4 (again with the additive term equal to $rev(\opt({t_f})) + rev(\opt({t_{f-1}})))$, resulting in a gap between the upper and lower bounds. 

\subsection{ General lower bound}
\label{sec:generalLB}
We first show that \textit{no} non-preemptive deterministic online algorithm (e.g. $\sbp$) can be competitive with the revenue earned by an optimal offline solution in the last two segments of time. We note that this claim applies to a stronger notion of non-preemption where, as in real-world systems like Uber/Lyft, once the server decides to serve a request, it must move there and serve it to completion. 


\begin{lemma}
\label{lem:last2}
     No non-preemptive deterministic online algorithm can be guaranteed to earn any fraction of the revenue earned by an optimal offline solution in the last $2T/f$ time units.  This is the case whether revenues are uniform or nonuniform.
\end{lemma}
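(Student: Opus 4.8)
The plan is to show that for an arbitrary non-preemptive deterministic online algorithm $\mathcal{A}$ there is a single instance on which $\opt$ completes a request during the last $2T/f$ time units — hence earns positive revenue there — while $\mathcal{A}$ completes no request at all. Any guarantee of a fixed fraction $\alpha>0$ would require, on every instance, $\mathcal{A}$'s revenue in the last $2T/f$ units to be at least $\alpha$ times $\opt$'s; so a single instance with $\opt$-revenue positive and $\mathcal{A}$-revenue $0$ in that window refutes every such $\alpha$, and it makes no difference whether we think of revenues as uniform or not, since the instance we build will use equal revenues throughout.

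For the construction, fix tiny constants $\rho,\beta>0$ and take a graph with origin $o$, two ``far'' source candidates $s_1,s_2$, and two \emph{distinct} destinations $d_1,d_2$; let $w_{s_1,d_1}=w_{s_2,d_2}=\rho$ and let every other edge have weight $T/f$. Since the two $\rho$-edges are vertex-disjoint, no triangle contains two of them, so this is a valid metric with maximum edge weight exactly $T/f$. Issue no request before time $\tau:=T-\rho-\beta$, and choose $\rho,\beta$ small enough (relative to $T/f$, using $f>1$) that simultaneously $\tau\ge T-2T/f$, $\ T/f\le\tau$, and $\beta<T/(2f)$.

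Next I would invoke determinism: because $\mathcal{A}$ sees nothing before time $\tau$, its position $v$ at time $\tau$ is fixed and known to the adversary. Since $w_{s_1,s_2}=T/f$, the point $v$ cannot lie within distance $\beta$ of both $s_1$ and $s_2$ (otherwise $T/f\le 2\beta$), so fix an index $i$ with $w_{v,s_i}>\beta$ and release the lone request $r=(s_i,d_i,\tau,1)$. Serving $r$ from $v$ would cost $\mathcal{A}$ at least $w_{v,s_i}+\rho>\beta+\rho=T-\tau$ units of time, so $\mathcal{A}$ can never complete $r$ and earns $0$ — in the last $2T/f$ units and overall. By contrast $\opt$, knowing $r$ in advance, walks $o\to s_i$ in time $w_{o,s_i}=T/f\le\tau$, waits until $\tau$, and then serves $r$, finishing at $\tau+\rho=T-\beta\le T$; this completion falls inside $(T-2T/f,T]$, so $\opt$ earns revenue $1$ within the last $2T/f$ time units. (If desired, $\opt$'s last-window revenue can be inflated arbitrarily by appending after $r$ a short chain of $\rho$-weight requests for $\opt$ to sweep up, which $\mathcal{A}$ still cannot reach.)

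The step I expect to need the most care is the joint feasibility of the timing parameters: $\tau$ must sit inside the window $[T-2T/f,T]$ and yet leave $\opt$ — but not $\mathcal{A}$ — enough time to reach the source and finish the ride, and a deterministic $\mathcal{A}$ facing no early requests may still roam the graph, which is precisely what the two mutually $(T/f)$-distant candidates $s_1,s_2$ are there to neutralize. All of this is controlled by the three inequalities on $\rho$ and $\beta$ above, each of which holds for all sufficiently small $\rho,\beta$ precisely because $f>1$; beyond checking those there is no real computation.
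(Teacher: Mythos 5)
Your proof is correct, but it takes a genuinely different route from the paper's. The paper's adversary works in two stages: it releases a \emph{bait} request at time $T-2T/f$ that the algorithm \emph{can} complete (source at distance $T/f$ from the algorithm's current position, ride of length $T/f$); if the algorithm declines, nothing else is released and $\opt$ wins $1$ to $0$, while if the algorithm commits --- and here the non-preemption assumption is essential --- a second request (or, in the uniform case, a chain of $k$ unit-revenue requests) of total revenue $k$ is released at $\opt$'s location, giving a ratio of $k$ that can be made arbitrarily large. You instead release a \emph{single} request so late ($\tau=T-\rho-\beta$) and, via the two mutually $(T/f)$-distant candidate sources $s_1,s_2$, provably so far from the deterministic algorithm's predetermined position that the algorithm completes nothing at all, yielding a $0$-versus-positive separation. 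Your version buys simplicity and a formally stronger conclusion: it needs neither non-preemption nor an adaptive second wave of requests, and the $s_1,s_2$ device makes rigorous the step the paper states only informally (``it takes $T/f$ time to travel between \textsc{alg}'s location and $s$''), including the case where the server sits mid-edge at time $\tau$ (your $w_{v,s_i}$ should be read as shortest-path distance, but the triangle-inequality argument is unaffected). What the paper's two-branch structure buys in exchange is reuse: essentially the same bait-then-punish template drives the proof of Theorem~\ref{thm:genLB}, and it exhibits the unboundedness of the ratio even on runs where the online algorithm does earn revenue. Your parameter bookkeeping ($\rho+\beta\le 2T/f$, $T/f\le\tau$, $2\beta<T/f$) and the metric-validity check (no triangle contains two $\rho$-edges) are both sound.
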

\begin{proof}[Proof idea.]  The adversary releases a request in the last two time segments and if the online algorithm chooses not to serve it no other requests will be released.  If the algorithm chooses to serve it, another batch of requests will be released elsewhere that the algorithm cannot serve in time.  Please see Appendix \ref{app:prelim} for details.
\end{proof}

We now present a general lower bound for our problem and  show that \textit{no} non-preemptive deterministic online algorithm (e.g. $\sbp$) can be better than 2-competitive with respect to the revenue earned by the offline optimal schedule (ignoring the last two time segments, due to Lemma \ref{lem:last2}, above).

\begin{theorem} 
\label{thm:genLB} No non-preemptive deterministic online algorithm for OLDARP can be guaranteed to earn more than twice the revenue earned by an optimal offline solution in the first $T-2T/f$ time units.  This is the case whether revenues are uniform or nonuniform.
\end{theorem}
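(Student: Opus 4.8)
The plan is to construct, for an arbitrary non-preemptive deterministic online algorithm $\on$, an instance on which $\opt$'s revenue restricted to the first $T-2T/f$ time units is at least twice the total revenue of $\on$. As is standard for online lower bounds, the adversary runs a simulated copy of $\on$ and releases requests in reaction to its moves; once every request has been issued the instance is fixed, so $\opt$ is well defined on it, and we compare $\on$'s total against the part of $\opt$'s schedule that finishes by time $T-2T/f$. Note this also yields the ``weakened-adversary'' reading mentioned above, where $\opt$ effectively loses its last two segments while $\on$ keeps all of $T$.

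For the metric I would use the origin $o$ together with a collection of pairwise-distant ``clusters,'' each cluster being a tiny ball of nodes, with the distance between any two clusters, and between $o$ and any cluster, equal to the maximum edge weight $T/f$; thus moving between clusters burns a full segment while serving requests within a cluster is essentially free. All requests are released at the starts of the first $f-2$ segments (the last two segments are left empty --- they are exactly the slack already isolated by the statement and by Lemma~\ref{lem:last2}). At the start of each such segment the adversary reads off the cluster currently occupied by the simulated $\on$ and (i) releases a small batch of requests of total value $v$ in that cluster, and (ii) releases a batch of total value $2v$ in some cluster that $\on$ does not occupy. Intuitively $\on$ always stands next to only the half-value batch, because to reach the double-value batch it must spend a whole segment, during which the adversary relocates the double-value batch again; $\opt$, knowing the whole instance in advance, parks at the double-value location(s) and harvests $2v$ per segment over the first $f-2$ segments.

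To make this rigorous I would (1) argue that, whatever $\on$ does, it collects at most $v$ of revenue per segment and hence at most about $(f-2)v$ in total, and (2) argue that on the resulting instance $\opt$ collects $2v$ in each of roughly $f-2$ segments, all completed within $[0,T-2T/f]$, by committing from the outset to the clusters where the double-value batches land. The uniform-revenue version follows by replacing ``value $v$'' with ``$v$ unit-revenue requests'' and ``value $2v$'' with ``$2v$ requests,'' so the construction also proves the claim when the objective is the number of served requests. The main obstacle is step (1): since edge weights are bounded by $T/f$ the online server is never more than one segment from anything, so in principle it could skip the small batches, chase a double-value batch, and ``catch up''; worse, it might deliberately wait while unserved double-value batches pile up at some cluster and then sweep them. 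The construction has to be arranged --- much as in the proof of Lemma~\ref{lem:last2} --- so that any such catch-up or sweep is paid for by an equal quantity of wasted travel, thereby preserving the factor of two; concretely this likely forces the double-value batches to each take close to a full segment to serve, and/or to be scattered across distinct clusters so that no cheaply collectible pile ever forms. I expect the careful accounting for an online server that oscillates between clusters, together with pinning down how many clusters are needed as a function of $f$ (since $\on$ can visit only $O(f)$ of them over $f$ segments), to be the fiddly part.
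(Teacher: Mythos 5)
Your construction has a genuine gap at exactly the point you flag as ``the fiddly part'': the claim that $\on$ collects at most $v$ per segment is not established, and as the instance is described it is likely false. Nothing forces $\on$ to serve the half-value batch at its own cluster; it can ignore every $v$-batch and chase the $2v$-batches, which do not vanish once released. If requests inside a cluster are ``essentially free,'' unserved $2v$-batches accumulate and $\on$ can sweep several in one visit; if instead you make each $2v$-batch cost a full segment to serve (your proposed fix), then $\opt$ also pays a move-then-serve segment pair per batch (the adversary must keep relocating the $2v$-batch away from $\on$, so the batches sit at distinct clusters), and $\opt$'s harvest rate drops to roughly $v$ per segment as well, killing the factor of $2$ on the other side. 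This circularity --- cheap batches let $\on$ catch up, expensive batches slow $\opt$ down --- is the entire content of the theorem, and your proposal does not break it.

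The paper's proof avoids a sustained per-segment advantage altogether. It fixes $f=5$, releases just two symmetric $\epsilon$-requests at time $X=T/f$, and then, \emph{after} observing when and where $\on$ commits (using non-preemption: once $\on$ starts a move or a request it must finish), extends the \emph{other} option into a chain of two requests whose sources and release times are tuned so that $\on$ provably lacks the time to serve more than one request total, while $\opt$ serves the full chain of two within the first $3X = T - 2T/f$ time units. The factor of $2$ comes from ``chain of two versus a single request'' in a one-shot gadget with explicit distance and timing case analysis (on whether $\on$ moves at time $X$, strictly between $X$ and $2X$, at $2X$, or later), not from doubling batch values every segment. The uniform-revenue version replaces the chain of two by a chain of $k$ unit requests against a single unit request, giving $2k/(k+1)\to 2$. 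To salvage your approach you would need to supply precisely this kind of timing/commitment argument; without it the proposal is a plausible-sounding plan rather than a proof.
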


\begin{proof}[Proof idea.] The adversary releases requests within the first $T-2T/f$ time segments such that depending on which request(s) the algorithm serves, another set of request(s) with twice as much revenue is released elsewhere that the algorithm cannot serve in time. Please see Appendix \ref{app:prelim} for details.
\end{proof}

\section{ Nonuniform revenues}

In this section we improve the lower and upper bounds for the competitive ratio of the \textsc{segmented best} \textsc{path} algorithm~\cite{atmos17}. In particular, we eliminate the gap between the lower and upper bounds of 4 and 6, respectively, from~\cite{atmos17}, by providing an instance where the lower bound is 5 and a proof for an upper bound of 5.  
Note that throughout this section we assume the revenue earned by $\opt$ in the last two time segments is bounded by some constant.  We must impose this restriction on the $\opt$ revenue of the last two time segments because, as we show in Lemma~\ref{lem:last2}, \textit{no} non-preemptive deterministic online algorithm can be guaranteed to earn any constant fraction of this revenue. 

\subsection{ Lower bound on SBP}\label{sec:lb}
\begin{figure}
    \begin{center}
 \includegraphics[width=5.5in]{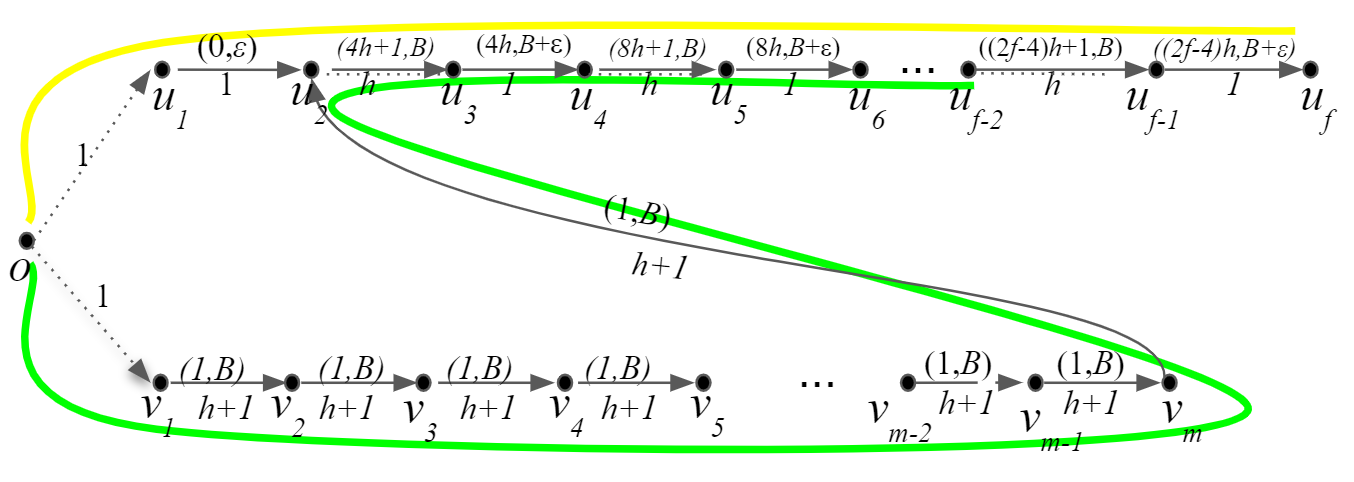}
    \caption{An instance where $\opt$ (whose path is shown in green below) earns $5-4/(f-2)$ times the revenue of $\sbp$ (shown in yellow above).  In this instance, $T=2hf$, and edges that represent requests are shown as solid edges. For each such edge the release time followed by revenue of the corresponding request is shown in parenthesis above the edge. The weight of an edge is shown below the edge. Dashed edges represent empty moves.}
    \label{fig:lb5}
  \end{center}
\end{figure}
\begin{theorem}\label{thm:lb}
If the revenue earned by $\opt$ in the last two time segments is bounded by some constant, and \textsc{sbp} is $\gamma$-competitive, then $\gamma \ge 5$. 
\end{theorem}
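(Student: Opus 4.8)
The plan is to prove the bound constructively: exhibit, for each (even) value of $f$, the explicit adversarial instance drawn in Figure~\ref{fig:lb5} (with $T = 2hf$, so every edge has weight at most $T/f = 2h$), show that on it the offline optimum collects exactly $5 - 4/(f-2)$ times the revenue $\sbp$ collects, and then let $f \to \infty$.

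First I would fix the instance precisely: the underlying graph, the weight of every edge, and for each request its source, destination, release time, and revenue, reading these off Figure~\ref{fig:lb5}. The instance is a sequence of gadgets, one per pair of time segments (one $\sbp$ ``round''), arranged so that $\opt$'s server can traverse a single connected walk---the green path---that services a large-revenue set of requests in essentially every segment, while a small-revenue ``decoy'' request hangs off that walk, positioned and released so that at the start of each of $\sbp$'s move-segments the decoy is the unique max-revenue-request-set visible to $\sbp$ at that moment. The release times are the delicate ingredient: the large-revenue requests of a round are released only after $\sbp$ has already committed by moving toward the decoy's source, and the geometry is chosen so that the large-revenue sets $\sbp$ ``skips'' can never later be packed into a length-$\le 2h$ path starting from any vertex $\sbp$ can subsequently occupy.

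Next I would trace $\sbp$'s run. Using the parity of $f$, $\sbp$ alternates move/serve and actually serves only during $t_2, t_4, \ldots, t_f$; discarding the last two segments (legitimate by the theorem's hypothesis together with Lemma~\ref{lem:last2}), that is $(f-2)/2$ serving segments. At each move-segment boundary I would verify that the outstanding requests offer no length-$\le 2h$ path of revenue exceeding the current decoy, so $R$ is exactly that decoy; hence $\sbp$ spends the move-segment reaching it and the serve-segment serving it, and, by the geometry above, it can never recover a skipped large-revenue set. Summing yields $\rev(\sbp)$. I would then describe $\opt$'s schedule---follow the green path from the origin, servicing one large-revenue set per segment after a brief ramp-up over the first segment or two---check it is feasible within $T$, and check that its revenue in the last two segments is a constant (it can be taken to be $0$), so the theorem's hypothesis holds. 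Comparing gives $\rev(\opt) = (5 - 4/(f-2))\,\rev(\sbp)$; since $\rev(\sbp) \to \infty$ as $f \to \infty$, any $\gamma$ for which $\rev(\opt) \le \gamma\,\rev(\sbp) + O(1)$ must satisfy $\gamma \ge 5$.

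The step I expect to be the main obstacle is the joint verification above: simultaneously forcing $\sbp$'s max-revenue-request-set to be the low-revenue decoy at \emph{every} decision point---in particular ruling out that $\sbp$ could chain several earlier- or later-released large-revenue requests into a single length-$\le 2h$ path---while still ensuring $\opt$'s continuous walk really does pack a full segment's worth of large-revenue requests into each segment. These two demands pull in opposite directions, and it is exactly their reconciliation that pins the constant at $5$ rather than something smaller; carrying it out requires a careful case analysis of the reachable request-sets at each segment boundary, with the parity of $f$ and the first/last-segment boundary effects producing the precise $-4/(f-2)$ correction term.
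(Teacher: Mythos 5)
Your plan points at the right instance (Figure~\ref{fig:lb5}), the right target ratio $5-4/(f-2)$, and the right limiting argument ($\epsilon\to 0$, $T\to\infty$, then $f\to\infty$, with $rev(\sbp)\to\infty$ washing out the additive constant). But your description of \emph{why} the instance fools $\sbp$ is backwards, and the mechanism you propose to verify would not survive the verification you correctly identify as the main obstacle. You describe $\sbp$ as being lured to a ``small-revenue decoy'' each round while ``large-revenue'' sets are hidden by release times and geometry. In the actual construction the requests $\sbp$ serves are the \emph{individually largest} requests in the whole instance (revenue $B+\epsilon$, versus $B$ for everything else), and the bulk of $\opt$'s advantage --- the bottom-row chain of $m\approx 3f/2-3$ requests, each of revenue $B$ and length $h+1$ --- is released at time $1$ and is fully visible to $\sbp$ at every decision point. $\sbp$ loses not because anything is hidden from it but because every pair of outstanding requests requires a path of length $>2h=T/f$, so its max-revenue-request-set is always a single request, and the single best one is always a $B+\epsilon$ top-row request. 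The only release-time trick is for the secondary top-row requests $(u_{2i},u_{2i+1},4ih+1,B)$, which arrive one time unit after $\sbp$ commits so that they can never be paired with a $B+\epsilon$ request inside one segment.

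A construction built to your stated specification would fail at exactly the step you flag. First, $\sbp$'s max-revenue-request-set is computed over \emph{all} paths of length at most $T/f$ in the auxiliary graph, not over paths starting from $\sbp$'s current vertex, and $\sbp$ can always reach the chosen source within one move segment (edge weights are at most $T/f$); so any outstanding request whose revenue exceeds your decoy's would be selected over the decoy, which forces the decoy to have revenue at least the maximum individual revenue --- contradicting ``small-revenue.'' Second, if the ``large-revenue set'' that $\opt$ serves in a single segment is a chain of total length at most $T/f$, then once released it is itself an admissible max-revenue-request-set for $\sbp$ in the \emph{next} window (its requests remain unserved in $\sbp$'s schedule), so $\sbp$ would simply serve it one window late and the ratio would collapse to a constant far below $5$. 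The paper's instance evades both traps by making $\opt$'s per-segment revenue come from chains that $\sbp$ can never execute within one segment (bottom-row requests of length $h+1$; top-row pairs of which one member has already been served by $\sbp$), not from requests $\sbp$ cannot see. Without this structural insight, the ``careful case analysis of the reachable request-sets'' you defer cannot be completed, so the proposal as written has a genuine gap.
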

\begin{proof}[Proof idea] For the formal details, please refer to the proof of Theorem \ref{thm:lb} in Appendix \ref{app:lb}.  Consider the instance depicted in Figure \ref{fig:lb5}.   Since $T=2hf$ in this instance, $h$ represents ``half'' the length of one time segment, so only one request of length $h+1$ fits within a single time segment for $\sbp$.  The general idea of the instance is that while $\sbp$ is serving every other request across the top row of requests (since the other half across the top are not released until after $\sbp$ has already passed them by), $\opt$ is serving the entire bottom row in one long chain, then also has time to serve the top row as one long chain.   
\end{proof}

\subsection{ Upper bound on SBP}
\label{sec:ub}
We now show that $\sbp$ is 5-competitive by creating 
 a modified, hypothetical $\sbp$ schedule that has additional copies of requests. 
First, we note that $\sbp$ loses a factor of 2 due to the fact that it serves requests during only every other time segment.  
Then, we lose another factor of two to cover requests in $\opt$ that overlap between time segments.  Finally, by adding at most one more copy of the requests served by $\sbp$ to make up for requests that $\sbp$ ``incorrectly'' serves prior to when they are served by $\opt$, we end up with 5 copies of $\sbp$ being sufficient for bounding the total revenue of $\opt$. 
Note that while this proof uses some of the techniques of the proof of the 6-competitive upper bound in~\cite{atmos17}, it reduces the competitive ratio from 6 to 5 by cleverly extracting the set of requests that $\sbp$ serves prior to $\opt$ before making the additional copies.

Let $rev(\opt)$ and $rev(\sbp)$ denote the total revenue earned by $\opt$ and $\sbp$ over all time segments $t_j$ from $j=1 \ldots f$.

\begin{theorem}
\label{onvsopt}
If the revenue earned by $\opt$ in the last two time segments is bounded by some constant, then \textsc{sbp} is 5-competitive, i.e., if $rev(\opt({t_f})) + rev(\opt({t_{f-1}})) \leq c$ for some constant $c$, then $\sum_{j=1}^f rev(\opt({t_j})) \le 5 \sum_{j=1}^f rev(\sbp({t_j}))  + c .$
We note that another interpretation of this result is that under a resource augmentation model where $\sbp$ has two more time segments available than $\opt$, $\sbp$ is 5-competitive.
\end{theorem}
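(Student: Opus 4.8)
The plan is to show that $rev(\opt)$ can be covered by five copies of $rev(\sbp)$, up to the additive constant $c$. Index $\sbp$'s operation by \emph{periods}: period $k$ consists of a move-segment followed by the serve-segment in which $\sbp$ serves its max-revenue-request-set $R_k$, where $R_k$ is chosen at the start of period $k$ among the requests released by then and not yet served by $\sbp$; thus $rev(\sbp)=\sum_k rev(R_k)$. The one structural property of $\sbp$ we need (established in~\cite{atmos17}) is that $rev(R_k)$ is at least the revenue of any set of requests that are all released by the start of period $k$, have not been served by $\sbp$ in an earlier period, and can be served by a single server within one time segment.

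First I would peel off the requests $\sbp$ serves ``too early.'' Let $X$ be the set of requests served by both $\sbp$ and $\opt$ for which $\sbp$'s serve-segment is strictly earlier than the segment in which $\opt$ completes that request. Each such request belongs to exactly one $R_k$ (a request leaves $\sbp$'s auxiliary graph the moment it is served), so $rev(X)\le rev(\sbp)$. It then suffices to prove $rev(\opt\setminus X)\le 4\,rev(\sbp)+c$, since adding $rev(X)\le rev(\sbp)$ gives the claimed bound. Extracting $X$ before making any copies is exactly the refinement that pushes the ratio below the $6$ of~\cite{atmos17}: the ``early'' requests are paid for once and for all, rather than being absorbed into an additional copy.

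To bound $rev(\opt\setminus X)$, I would align $\sbp$'s periods with $\opt$'s timeline and charge each period separately, losing two factors of two. The first factor of two is inherent in $\sbp$ serving during only every other segment. The second covers $\opt$-requests whose service straddles a segment boundary, so that the revenue $\opt$ earns ``inside'' a given period is really spread over a window of a few consecutive segments. The linchpin is the following timing fact: a request $\opt$ completes during period $k$ that does not lie in $X$ has, by the start of period $k+1$, been released and has not yet been served by $\sbp$, so it is eligible for $R_{k+1}$; hence, after splitting into single-segment-servable pieces, the revenue of all such requests from period $k$ is bounded by a constant multiple of $rev(R_{k+1})$, while the portion of period-$k$ $\opt$-requests that $\sbp$ does serve within period $k$ itself (necessarily no earlier than $\opt$, since it avoids $X$) lies inside $R_k$. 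Because the revenue of $\opt$ in the last two segments is at most $c$, the index $k+1$ never runs past the final period for the relevant range of $k$, so nothing escapes off the end; summing over all periods yields $rev(\opt\setminus X)\le 4\,rev(\sbp)+c$, and combining with $rev(X)\le rev(\sbp)$ finishes the proof.

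I expect the main obstacle to be the bookkeeping in that summation: one must set up the charging so that each $\opt$-request is paid for exactly once --- in particular so that the requests $\sbp$ serves ``at or after'' $\opt$ are charged to $R_k$ without colliding with the charges to $R_{k+1}$ or with the boundary-straddling requests --- and verify that the constants really collapse to $4$ (hence $5$ overall) rather than something larger; this is precisely where the clean separation afforded by $X$ does the work. A small separate check is needed for $f$ odd, where $\sbp$ wastes segment $t_1$ and the period indexing shifts. Finally, the resource-augmentation rephrasing is immediate: if $\sbp$ is given two extra time segments beyond $\opt$'s horizon, $\opt$ has no last-two-segments revenue to account for, so the additive $c$ vanishes.
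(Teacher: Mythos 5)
Your overall architecture matches the paper's: extract the set of requests that $\sbp$ serves ``too early'' and pay for it with one copy of $rev(\sbp)$; then bound the remainder by $4\,rev(\sbp)$ via one factor of $2$ for serving only every other segment and one factor of $2$ for the boundary-straddling request (split the $\opt$-set at the straddler into two single-segment-servable pieces and invoke the greediness of the max-revenue-request-set). The greediness property you isolate is exactly the one the paper uses, and your closing remark about the resource-augmentation reading is also how the paper phrases it.

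The gap is in the charging alignment, which is precisely the step you defer as ``bookkeeping,'' and as described it does not collapse to $4$. You define $X$ as the requests $\sbp$ serves \emph{strictly earlier} than $\opt$, and then split the period-$k$ requests of $\opt\setminus X$ into (a) those $\sbp$ also serves in period $k$, charged to $R_k$, and (b) the rest, charged to $2\,rev(R_{k+1})$. Summing over $k$, the (a)-charges total at most $rev(\sbp)$ and the (b)-charges at most $2\,rev(\sbp)$, so the better-segment revenue of $\opt\setminus X$ is bounded by $3\,rev(\sbp)$ rather than $2\,rev(\sbp)$; after the every-other-segment doubling and adding back $rev(X)$ you land at $7$ (or $6$ with more careful handling of the (a) part), not $5$. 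The paper's resolution is to enlarge the extracted set: it works with the schedule $\sbp'$ obtained by shifting $\sbp$ one window earlier and removes from $\opt$ every request that $\sbp'$ has served in a strictly earlier window --- equivalently, every request $\sbp$ serves in the same window as $\opt$ \emph{or earlier}. What remains of $\opt$'s window $k$ then decomposes as $A_k\cup C_k^*$, where $A_k$ is served by $\sbp$ in window $k+1$ and $C_k^*$ is available to (but not chosen by) $\sbp$ in window $k+1$; the straddler case analysis gives $rev(A_k)+rev(C_k^*)\le 2\bigl(rev(A_k)+rev(C_k)\bigr)$, i.e.\ everything in $\opt$'s window $k$ is measured against the \emph{single} $\sbp$ window $k+1$, making the middle constant $2$. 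So the idea of extracting the early-served set before copying is the right one, but ``early'' must be the window-shifted notion for the constants to come out to $5$.
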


\begin{proof}
We analyze the revenue earned by $\sbp$ by considering the time segments in pairs (recall that the length of a time segment is $T/f$ for some $1 < f <T$). We refer to each pair of consecutive time segments as a time window, so if there are $f$ time segments, there are $\lceil f/2 \rceil$ time windows.  Note that the last time window may have only one time segment.

For notational convenience we consider a modified version of the $\sbp$ schedule, that we refer to as $\sbp'$, which serves exactly the same set of requests as $\sbp$, but does so one time window earlier. Specifically, if $\sbp$ serves a set of requests during time window $i \ge 2$, $\sbp'$ serves this set during time window $i-1$ (so $\sbp'$ ignores the set served by $\sbp$ in window 1). We note that the schedule of requests served by $\sbp'$ may be infeasible, and that it will earn at most the amount of revenue earned by $\sbp$.

 Let $B_i$ denote the set of requests served by  $\opt$ in window $i$ that $\sbp^{\prime}$ already served \emph{before} in some window $j < i$.  
 And let $B$ be the set of all requests that have already been served by $\sbp^{\prime}$ in a previous window by the time they are served in the $\opt$ schedule. Formally, $B = \bigcup_{i=2}^{\lceil f/2 \rceil} B_i$. 
 Consider a schedule $\overline{\opt}$ that contains all of the requests in the $\opt$ schedule minus the requests in $B$. So $\overline{\opt}$ earns total revenue $rev(\opt) - rev(B)$, where $rev(B)$ denotes the total revenue of the set $B$.  

Let $\overline{\opt}(t_j)$ denote the set of requests served by $\overline{\opt}$ in time segment $t_j$. 
Let $\overline{\opt}_i$ denote the set of requests served by $\overline{\opt}$ in the time segment of window $i$ with greater revenue, i.e. $\overline{\opt}_i = \arg\max\{rev(\overline{\opt}({t_{2i-1}})),rev(\overline{\opt}({t_{2i}}))\}$. Note this set may include a request that was started in the prior time segment, as long as it was completed in the time segment of $\overline{\opt}_i$.
Let $rev(\overline{\opt_i})$ denote the revenue earned in $\overline{\opt_i}$.

Let $\sbp'_i$ denote the set of requests served by $\sbp'$ in window $i$ and let $rev(\sbp'_i)$ denote the revenue earned by $\sbp'_i$. 
Let $H$ denote the chronologically ordered set of time windows $w$ where $rev(\overline{\opt}_w) > rev(\sbp'_w)$, and let $h_j$ denote the $j$th time window in $H$. We refer to each window of $H$ as a window with a ``hole,'' in reference to the fact that $\sbp'$ does not earn as much revenue as $\overline{\opt}$ in these windows. 
In each window $h_j$ there is some amount of revenue that $\overline{\opt}$ earns that $\sbp'$ does not. In particular, there must be a set of requests that $\overline{\opt}$ serves in window $h_j$ that $\sbp'$ does not serve in $h_j$. Note that this set must be available for $\sbp'$ in $h_j$ since $\overline{\opt}$ does not include the set $B$. 

Let $\overline{\opt}_{h_j} = {A}_j \cup {C}^*_j$, where ${A}_j$ is the subset of requests served by both $\overline{\opt}$ and ${\sbp'}$ in ${h}_j$ and $C^*_j$ is the subset of $\overline{\opt}$ requests available for ${\sbp'}$ to serve in $h_j$ but ${\sbp'}$ chooses not to serve. 
Let us refer to the set of requests served by $\sbp'$ in $h_j$ as $\sbp'_{h_j} =A_j \cup C_j$ for some set of requests $C_j$.  Note that if $\overline{\opt}_{h_j} = A_j \cup C^*_j$ can be executed within a single time segment, then $rev(C_j) \ge rev(C^*_j)$  by the greediness of $\sbp'$.  However, since $h_j$ is a hole we know that the set $\overline{\opt}_{h_j}$ cannot be served within one time segment.


Our plan is to build an infeasible schedule $\overline{\sbp}$ that will be similar to $\sbp'$ but contain additional ``copies'' of some requests such that no windows of $\overline{\sbp}$ contain holes.  
We first initialize $\overline{\sbp}$ to have the same schedule of requests as $\sbp'$. 
We then add additional requests to $h_j$  for each $j = 1 \ldots |H|$, based on $\overline{\opt}_{h_j}$.

Consider one such window with a hole $h_j$, and let $k$ be the index of the time segment corresponding to $\overline{\opt}_{h_j}$. 
We know $\overline{\opt}$ must have begun serving a request of $\overline{\opt}_{h_j}$ in time segment $t_{k-1}$ and completed this request in time segment $t_k$. Let us use $r^*$ to denote this request that ``straddles'' the two time segments.

After the initialization of $\overline{\sbp}=\sbp'$, recall that the set of requests served by $\overline{\sbp}$ in $h_j$ is ${\overline{\sbp}}_{h_j} =A_j \cup C_j$ for some set of requests $C_j$. 
We add to $\overline{\sbp}$ a copy of a set of requests. 
There are two sub-cases depending on whether $r^*\in C^*_j$ or not.  

Case $r^* \in C^*_j$.  In this case, by the greediness of $\sbp$, and the fact that both $r^*$ alone and $C^*_j\setminus{\{r^*\}}$ can separately be completed within a single time segment, we have: $rev(C_j) \ge \max\{rev(r^*),rev(C^*_j\setminus \{r^*\})\} \ge \frac{1}{2}rev(C^*_j).$
We then add a copy of the set $C_j$ to the $\overline{\sbp}$ schedule, so there are two copies of $C_j$ in $h_j$.  
Note that for $\overline{\sbp}$, $h_j$ will no longer be a hole since: 
$rev(\overline{\opt}_{h_j}) = rev({A}_j)  + rev({C}^*_j) \leq rev(A_j) + 2\cdot rev(C_j) = rev(\overline{\sbp}_{h_j}).$

Case $r^* \notin C^*_j$.  In this case 
$C^*_j$ can be served within one time segment but $\sbp'$ chooses to serve $A_j \cup C_j$ instead. So we have $rev(A_j) + rev(C_j) \ge rev(C^*_j)$, therefore we know either $rev(A_j) \ge \frac{1}{2}rev(C^*_j)$ or $rev(C_j) \ge \frac{1}{2}rev(C^*_j)$.  In the latter case, we can do as we did in the first case above and add a copy of the set  $C_j$ to the $\overline{\sbp}$ schedule in window $h_j$, to get $rev(\overline{\opt}_{h_j})  \leq rev(\overline{\sbp}_{h_j})$, as above.  In the former case, we instead add a copy of $A_j$ to the $\overline{\sbp}$ schedule in window $h_j$. Then again,  for $\overline{\sbp}$, $h_j$ will no longer be a hole, since this time:
$rev(\overline{\opt}_{h_j}) = rev({A}_j) + rev({C}^*_j) \leq 2\cdot rev(A_j) + rev(C_j) = rev(\overline{\sbp}_{h_j}).$

Note that for all windows $w \notin H$  
that are not holes, we already have $rev(\sbpbar_w) \ge rev(\optbarprime_w)$. So we have
\begin{eqnarray}
\label{vsoptbarprime}
 \sum_{i=1}^{\lceil f/2 \rceil -1} rev(\optbarprime_i) \le \sum_{i=1}^{\lceil f/2 \rceil -1} rev(\sbpbar_i ) \le 2\sum_{i=1}^{\lceil f/2 \rceil -1} rev( \sbpprime_i). \end{eqnarray}
where the second inequality is because $\sbpbar$ contains no more than two instances of every request in $\sbpprime$. 
Combining (\ref{vsoptbarprime}) 
with the fact that $\sbpprime$ earns at most what $\sbp$ does yields
\begin{equation}
\label{sum1}
\sum_{i=1}^{\lceil f/2 \rceil} rev(\optbarprime_i) \le 2 \sum_{i=1}^{\lceil f/2 \rceil} rev(\sbp_i) + rev(\overline{\opt}(t_{f-1})) + rev(\overline{\opt}(t_f)) .
\end{equation}
Since $\sbp$ serves in only one of two time segments per window, we have
$\sum_{i=1}^{\lceil f/2 \rceil} rev(\sbp_i) = \sum_{j=1}^f  rev(\sbp(t_j)).$
Hence, by the definition of $\optbarprime$, and by (\ref{sum1}) 
we can say
\begin{eqnarray}
\label{sum3}
\sum_{j=1}^f rev(\overline{\opt}(t_j)) \le  2\sum_{i=1}^{\lceil f/2 \rceil} rev( \optbarprime_i) \le  4\sum_{j=1}^f rev(\sbp(t_j)) + rev(\overline{\opt}(t_{f-1})) + rev(\overline{\opt}(t_f)).
\end{eqnarray}
%
Now we must add in any request in $B$, such that $\opt$ serves the request in a time window after $\sbpprime$ serves that request. By definition of $B$ (as the set of all requests that have been served by $\sbpprime$ in a previous window) $B$ may contain at most the  same set of requests served by $\sbpprime$. Therefore $rev(B) \le rev(\sbpprime)$, so
$rev(B) \le rev(\sbp).$
By the definition of $\opt$, $\opt = \overline{\opt} + B$, so			
\begin{equation}
\label{end2}
\sum_{j=1}^f rev(\opt(t_j)) =  rev(B) + \sum_{j=1}^f rev(\overline{\opt}(t_j))	
\end{equation}
And by combining (\ref{sum3})-(\ref{end2}) with the fact that $rev(B) \le rev(\sbp)$, we have
\begin{align*}
\sum_{j=1}^f rev(\opt(t_j)) &\le& \sum_{j=1}^f rev(\sbp(t_j)) + 4\sum_{j=1}^f rev(\sbp(t_j)) + rev(\overline{\opt}(t_{f-1})) + rev(\overline{\opt}(t_f))	\\	
&\le& 5 \sum_{j=1}^f rev(\sbp(t_j)) + rev(\overline{\opt}(t_{f-1})) + rev(\overline{\opt}(t_f)).
\end{align*}

\end{proof}

\section{ Uniform revenues}
\label{sec:uniformrev}

We now consider the setting where revenues are uniform among all requests, so the goal is to maximize the total number of requests served. This variant is useful for settings where all requests have equal priorities, for example for not-for-profit services that provide transportation to elderly and disabled passengers. 
The proof strategy is to carefully consider  the requests served by $\sbp$ in each window and track how they differ from that of $\opt$.  The final result is achieved through a clever accounting of the differences between the two schedules, and bounding the revenue of the requests that are ``missing'' from $\sbp$. 

We note that the lower bound instance of Theorem \ref{thm:lb} can be modified to become a uniform-revenue instance that has ratio $5-14/f.$
On the other hand, we also show that $\opt$ earns at most 4 times the revenue of $\sbp$ in this setting if we assume the revenue earned by $\opt$ in the last two time segments is bounded by a constant, and allow $\sbp$ an additive bonus of $f$ 
Note that when revenues are uniform, \textit{no} non-preemptive deterministic online algorithm can earn $rev(\opt({t_f})) + rev(\opt({t_{f-1}}))$ (see Lemma~\ref{lem:last2}). 
We begin with several definitions and lemmas.

As in the proof of Theorem~\ref{onvsopt}, we consider a modified version of the $\sbp$ schedule, that we refer to as $\sbp'$, which serves exactly the same set of requests as $\sbp$, but does so one time window earlier.  For all windows $i=1,2,..., m$, where $m=\lceil f/2\rceil -1$, we let $S'_i$ denote the set of requests served by $\sbp'$ in window $i$ and $S^*_i$ denote
the set of requests served by $\opt$ during the time segment of window $i$ with greater revenue, i.e. $S^*_i =\argmax \{ rev(\opt (t_{2i-1}), rev(\opt(t_{2i})) \}$ where $rev(\opt (t_j))$ denotes the revenue earned by $\opt$ in time segment $t_j$. We define a new set $J_i^*$ as the set of requests served by $\opt$ during the time segment of window $i$ with less revenue, i.e. $J_i^*=\argmin \{ rev(\opt (t_{2i-1}), rev(\opt(t_{2i})) \}$.

Let $S_i^*=A_i \cup X_i^* \cup Y_i^*$, and $S_i'=A_i \cup X_i \cup Y_i$, where:
  (1) $A_i$ is the set of requests that appear in both $S_i^*$ and $S_i'$; (2)
$X_i^*$ is the set of requests that appear in $S_w'$ for some $w=1,2,...,i-1$. Note there is only one possible $w$ for each individual request $r\in X_i^*$, because each request can be served only once; 
(3)
$Y_i^*$ is the set of requests such that no request from $Y_i^*$ appears in $S_w'$ for any $w=1,2,...,i-1,i$; 
(4)
$X_i$ is the set of requests that appear in $S_w^*$ for some $w=1,2,...,i-1$. Note there is only one possible $w$ for each individual request $r\in X_i$, because each request can be served only once;
(5)
$Y_i$ is the set of requests such that no request from $Y_i$ appears in $S_w^*$ for any $w=1,2,...,i-1,i$.

Note that elements in $Y_i$ can appear in a previous $J_w^*$ for any $w=1,2,...,i-1, i$ or in a future $S^*_v$ or $J^*_v$ for any $v=i+1,i+2,...,m$, or may not appear in any other sets.
Also note that since each request can be served at most once, we have:
 $A_1 \cap X_1^* \cap Y_1^* \cap A_2 \cap X_2^* \cap Y_2^*\cap...\cap A_{m}\cap X_{m}^*\cap Y_{m}^* = \emptyset$ 
  and
  $A_1\cap X_1\cap Y_1\cap A_2\cap X_2\cap Y_2\cap ...\cap A_{m}\cap X_{m}\cap Y_{m} = \emptyset$.

Given the above definitions, we have the following lemma whose proof has been deferred to Appendix \ref{app:uni-rev}.  It states that at any given time window, the cumulative requests of $\opt$ that were earlier served by $\sbp$ are no more than the number that have been served by $\sbp$ but not yet by $\opt$.

\begin{lemma}
    \label{lemma-a-4}
    $|X_1^*|+|X_2^*|+...+|X_i^*| \leqslant |Y_1|+|Y_2|+...+|Y_{i-1}| + |Y_i|$ for all $i=1,2,...,m$
\end{lemma}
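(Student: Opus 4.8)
The plan is to prove the inequality by induction on $i$, tracking a single quantity: the discrepancy $D_i := \bigl(|X_1^*| + \dots + |X_i^*|\bigr) - \bigl(|Y_1| + \dots + |Y_i|\bigr)$, and showing $D_i \le |Y_i|$ for all $i$ (which is equivalent to the stated claim, since the right-hand side of the lemma is $|Y_1|+\dots+|Y_{i-1}|+|Y_i|$ and the left is $|X_1^*|+\dots+|X_i^*|$, i.e.\ the claim says $D_i \le |Y_i|$... actually $D_i + |Y_i| + (|Y_1|+\dots+|Y_i|) $ — let me restate cleanly in the writeup). The key structural fact is a ``conservation'' identity: every request $r$ counted in some $X_w^*$ (a request $\opt$ serves in window $w$ that $\sbp'$ served strictly earlier) is a request $\sbp'$ serves in a window $v < w$, and at the time $\sbp'$ served it, $r$ was not yet served by $\opt$ — so $r$ lies in $Y_v$. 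Conversely, each request of $Y_v$ is served by $\sbp'$ in window $v$ and not yet by $\opt$; if $\opt$ ever serves it later, in some window $w > v$, it is counted in exactly one $X_w^*$, and otherwise it is never counted in any $X^*$. This gives an injection from $\bigcup_{w \le i} X_w^*$ into $\bigcup_{v \le i-1} Y_v$ (the image lies in windows strictly before the window where $\opt$ serves the request, hence at most window $i-1$), which immediately yields $|X_1^*| + \dots + |X_i^*| \le |Y_1| + \dots + |Y_{i-1}|$, and the extra $|Y_i|$ term on the right is slack.

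First I would set up the injection $\phi$ formally: for $r \in X_w^*$, let $\phi(r)$ be the pair $(r, v)$ where $v$ is the unique window in which $\sbp'$ serves $r$; by definition of $X_w^*$ we have $v < w \le i$, so $v \le i - 1$, and I must check $r \in Y_v$, i.e.\ that $r$ appears in no $S_u^*$ for $u \le v$. This is where I use that $\opt$ serves $r$ in window $w > v$ and each request is served at most once by $\opt$: if $r$ were in $S_u^*$ for some $u \le v < w$ that would be a second service by $\opt$, contradiction. Second, $\phi$ is injective because $r$ determines $v$ uniquely (a request is served by $\sbp'$ at most once) and of course determines itself. Third, the windows appearing as second coordinates of the image are all $\le i-1$, so $|\bigcup_{w=1}^i X_w^*| \le \sum_{v=1}^{i-1} |Y_v|$; and since the $X_w^*$ are pairwise disjoint (a request served at most once by $\opt$ lands in a unique window's $X^*$), the left side equals $\sum_{w=1}^i |X_w^*|$. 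Adding the nonnegative term $|Y_i|$ to the right finishes it.

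I would double-check one subtlety: the definition of $S_i^*$ picks only the \emph{higher-revenue} time segment of window $i$ for $\opt$ (and $J_i^*$ the lower), so ``$\opt$ serves $r$ in window $w$'' for the purpose of $X_w^*$ means $r \in S_w^*$, not merely that $\opt$ serves $r$ somewhere in window $w$. This is consistent: $X_w^*$ is defined as the subset of $S_w^*$ whose elements appeared in an earlier $S_v'$, so the argument above only ever references $S_w^*$ membership, and the uniqueness/disjointness claims all follow from ``each request is served at most once'' (which applies a fortiori to the sub-collection $\{S_w^*\}$, since these are disjoint subsets of $\opt$'s served set). A request $r \in Y_v$ that $\opt$ later serves in the \emph{lower}-revenue segment of some window, i.e.\ in $J_w^*$ rather than $S_w^*$, simply never gets counted in any $X^*$ — harmless, it only makes the inequality slacker.

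The main obstacle I anticipate is not the combinatorial core (the injection is clean) but pinning down the exact bookkeeping convention so the off-by-one between ``windows $\le i-1$'' in the image and the ``$+|Y_i|$'' term in the statement is transparent, and making sure the argument does not secretly need \emph{feasibility} of the $\sbp'$ schedule (it does not — $\sbp'$ is acknowledged to be infeasible, but all we use is that it serves each request at most once and that ``served in window $v$'' is well-defined, both of which hold). A secondary point to handle carefully is that $\sbp'$ is a time-shifted copy of $\sbp$, so ``$\sbp'$ serves $r$ in window $v$'' corresponds to ``$\sbp$ serves $r$ in window $v+1$''; since the lemma and all the $S_i', X_i^*, Y_i$ are stated entirely in terms of $\sbp'$'s windows, no translation is needed, but I would state this explicitly to avoid confusion.
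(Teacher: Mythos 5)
Your proposal is correct and follows essentially the same route as the paper: your injection $\phi$ is just the set inclusion $\bigcup_{w\le i} X_w^* \subseteq \bigcup_{v\le i-1} Y_v$ that the paper establishes in its Lemmas \ref{lemma-a-1} and \ref{lemma-a-2}, proved by the same observation that a request $\opt$ serves in window $w$ cannot lie in $A_v$ or $X_v$ for the window $v<w$ where $\sbp'$ serves it, hence lies in $Y_v$. The passage from the inclusion to the sum of cardinalities via pairwise disjointness of the $X_w^*$ (and the slack $+|Y_i|$) matches the paper's final step exactly.
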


We are now ready to prove our main theorem of this section.
\begin{theorem}
\label{4-comp-theorem}

    If the revenue earned by $\opt$ in the last two time segments is bounded by some constant, $c$, then $\sbp$ earns at least 1/4 the revenue of the optimal solution, minus an additive term linear in $f$, where $T/f$ is the length of one time segment. (So if $f$ is bounded by some constant, then $\sbp$ is 4-competitive).
I.e., $    \sum_{j=1}^{f} rev($\opt$(t_j)) \le 4\sum_{j=1}^{f} rev($\sbp$(t_j))+2\lceil f/2 \rceil+c$. 
\end{theorem}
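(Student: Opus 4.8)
The plan is to follow the same "shift $\sbp$ one window earlier, then add copies" strategy used in Theorem~\ref{onvsopt}, but now exploit the uniform-revenue structure to get a tighter bound of $4$ (plus the additive $f$-linear slack) rather than $5$. The point is that with uniform revenues, revenue equals cardinality, so all the set-algebra above becomes counting, and Lemma~\ref{lemma-a-4} lets us "pay back" the requests that $\sbp'$ served prematurely using the surplus requests $Y_i$ that $\sbp'$ serves but $\opt$ has not yet served. I would first establish, window by window, a local inequality bounding $|S^*_i|$ in terms of $|S'_i|$ and the bookkeeping sets. Since $S^*_i = A_i \cup X^*_i \cup Y^*_i$ and $S'_i = A_i \cup X_i \cup Y_i$, and since $Y^*_i$ is available to $\sbp'$ in window $i$ and was not served earlier by $\sbp'$, greediness of $\sbp'$ within a single time segment gives $|A_i| + |X_i| + |Y_i| = |S'_i| \ge |A_i| + |Y^*_i \cap (\text{feasible-in-one-segment part})|$; but as in the $5$-competitive proof, $S^*_i$ may straddle two segments, so I split off the straddling request $r^*$ and get $|Y^*_i| \le 2|X_i \cup Y_i| + 1$ (the $+1$ for $r^*$), hence $|S^*_i| \le |A_i| + |X^*_i| + 2|X_i| + 2|Y_i| + 1$.

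Next I would sum this over all $m = \lceil f/2\rceil - 1$ full windows. The terms $\sum |A_i|$ and $\sum |X_i| + \sum|Y_i|$ are each at most $\sum |S'_i| = rev(\sbp')$ (disjointness within a window), and the $+1$'s contribute $m \le \lceil f/2\rceil$. The remaining nuisance term is $\sum_{i} |X^*_i|$ — the requests $\opt$ serves that $\sbp'$ had already served in an earlier window — and this is exactly what Lemma~\ref{lemma-a-4} controls: $\sum_{i=1}^m |X^*_i| \le \sum_{i=1}^m |Y_i| \le rev(\sbp')$. Combining, $\sum_{i=1}^m |S^*_i| \le 4\,rev(\sbp') + \lceil f/2\rceil$ (being a little careful about how many copies of $\sbp'$ I have actually used — I expect it to come out to exactly a factor $4$ after accounting for the $A_i$ copy, one $X_i+Y_i$ copy, the $X^*_i$-for-$Y_i$ swap, and one more copy somewhere, which is where the case analysis on $r^*$ matters). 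Then, exactly as in Theorem~\ref{onvsopt}, I pass from $\sum |S^*_i|$ to $\sum_{j=1}^f rev(\opt(t_j))$: the window-max $S^*_i$ has at least half the window's $\opt$-revenue, so $\sum_j rev(\opt(t_j)) \le 2\sum_i |S^*_i|$ up to the last window, which accounts for another factor of $2$ — so I must make sure the per-window bound above is actually on $\tfrac12$ of the window revenue and the factors multiply to $4$ overall, not $8$; the trick (as in the nonuniform case) is that the factor-$2$ from "serves every other segment" and the factor-$2$ from the copies are the \emph{same} inequality chain $\sum rev(\sbp_i) = \sum_j rev(\sbp(t_j))$, not composed. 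Finally I fold in the last (possibly singleton) window and the last two time segments of $\opt$ as the additive constant $c$, yielding $\sum_{j=1}^f rev(\opt(t_j)) \le 4\sum_{j=1}^f rev(\sbp(t_j)) + 2\lceil f/2\rceil + c$.

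**Main obstacle.** The delicate part is the per-window accounting that keeps the constant at $4$ rather than $5$: in the nonuniform proof one sometimes doubles $A_j$ and sometimes doubles $C_j$, and naively summing those choices could cost an extra copy of $\sbp$. Here the uniform structure plus Lemma~\ref{lemma-a-4} should let the "premature" requests $X^*_i$ be absorbed into the $Y_i$ surplus \emph{globally} (via the telescoped inequality) rather than \emph{locally} window-by-window, which is precisely what saves the fifth copy. I would need to be careful that the sets $X_i, Y_i$ partition $S'_i \setminus A_i$ consistently across windows (each request assigned to exactly one window in exactly one role) so that $\sum_i(|X_i|+|Y_i|) \le rev(\sbp')$ genuinely holds, and that the straddling-request $+1$ is charged once per window and nowhere else — giving the clean $+2\lceil f/2\rceil$ additive term after re-expanding windows into segments.
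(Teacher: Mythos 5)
Your overall architecture matches the paper's: shift to $\sbp'$, decompose $S_i^*=A_i\cup X_i^*\cup Y_i^*$ and $S_i'=A_i\cup X_i\cup Y_i$, use Lemma~\ref{lemma-a-4} to absorb the prematurely-served requests $\sum_i|X_i^*|$ into $\sum_i|Y_i|$, pay an additive $+1$ per window for straddlers, and finish with the factor-$2$ window-max step. However, your central per-window inequality carries a spurious factor of $2$: you claim $|Y_i^*|\le 2(|X_i|+|Y_i|)+1$, whereas the bound actually needed (and the one the paper proves via its three-way case analysis on whether the straddling request $r^*$ lies in $Y_i^*$, $X_i^*$, or $A_i$) is $|Y_i^*|\le |X_i|+|Y_i|+1$. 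This coefficient-$1$ bound is precisely where uniform revenues are used: since $rev(r^*)=1$, excising the straddler from $Y_i^*$ costs only an additive $1$, i.e.\ $rev(Y_i^*\setminus\{r^*\})=|Y_i^*|-1\le |X_i|+|Y_i|$ by greediness. In the nonuniform setting one cannot subtract $rev(r^*)$ additively --- it may be as large as half of $rev(C_j^*)$ --- which is exactly why Theorem~\ref{onvsopt} is forced into the multiplicative factor $2$ and hence the weaker constant. You have imported the nonuniform-case loss \emph{and} added the $+1$, which is doubly wasteful: your bound $\sum_i|S_i^*|\le 4\,rev(\sbp')+m$ (even tightened, it yields $3\,rev(\sbp')+m$) cannot give the theorem. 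With the correct local inequality the chain is $\sum_i|S_i^*|\le \sum_i|A_i|+\sum_i|X_i^*|+\sum_i(|X_i|+|Y_i|)+m\le rev(\sbp')+\sum_i|Y_i|+m\le 2\,rev(\sbp')+m$, and then doubling for the window max gives $4$.

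Your proposed escape from the resulting $8$-versus-$4$ discrepancy --- that the window-max factor of $2$ and the copies factor of $2$ ``are the same inequality chain, not composed'' --- is not correct. In both Theorem~\ref{onvsopt} and Theorem~\ref{4-comp-theorem} those two factors genuinely multiply: $\sum_j rev(\opt(t_j))\le 2\sum_i rev(S_i^*)\le 2\bigl(2\,rev(\sbp')+m\bigr)$. The identity $\sum_i rev(\sbp_i)=\sum_j rev(\sbp(t_j))$ that you invoke is an exact equality (because $\sbp$ serves in only one segment of each window); it does not rescue a factor of $2$ anywhere. So the gap in your proposal is concrete: you need the case analysis on the location of $r^*$ to establish $|X_i|+|Y_i|\ge |Y_i^*|-1$, and without it the argument bottoms out at $6$ or $8$ rather than $4$.
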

\begin{proof}
    Note that since revenues are uniform, the revenue of a request-set $U$ is equal to the size of the set $U$, i.e., $rev(U)=|U|$. 
    Consider each window $i$ where $rev(S_i^*)>rev(S_i')$. Note that the set $S_i^*$ may not fit within a single time segment. We consider two cases based on $S_i^*$. 
    \begin{enumerate}
    \item The set $S_i^*$ can be served within one time segment. Note that within $S_i^*=A_i \cup X_i^* \cup Y_i^*$, $X_i^*$ is not available for $\sbp'$ to serve because $\sbp'$ has served the requests in $X_i^*$ prior to window $i$. Among requests that are available to $\sbp'$, $\sbp'$ greedily chooses to serve the maximum revenue set that can be served within one time segment. Therefore, we have
   
        $rev(X_i)+rev(Y_i) \geqslant rev(Y_i^*).$
    Since revenues are uniform, we also have
  $      |X_i| + |Y_i| \geqslant |Y_i^*|.$
    
    If this is not the case, then $\sbp'$ would have chosen  to serve $Y_i^*$ instead of $X_i \cup Y_i$ since it is feasible for $\sbp'$ to do so because the entire $S_i^*$ can be served within one time segment.

    \item The set $S_i^*$ cannot be served within one time segment. This means there must be one request in $S_i^*$ that $\opt$ started serving in the previous time segment. We refer to this straddling request as $r^*$. There are three sub-cases based on where $r^*$ appears. 
    \begin{enumerate}
    \item  If $r^* \in Y_i^*$, then due to the greediness of $\sbp'$, we know that
    \begin{eqnarray}
        \label{eq-2a-1}
        rev(X_i) + rev(Y_i) \ge rev(r^*)
    \end{eqnarray}
    since otherwise $\sbp'$ would have chosen to serve $r^*$. We also know 
    \begin{eqnarray}
        \label{eq-2a-2}
        rev(X_i) + rev(Y_i) \ge rev(Y_i^* \backslash \{r^*\})
    \end{eqnarray}
    since otherwise $\sbp'$ would have chosen to serve $Y_i^*\backslash \{ r^* \}$. 
    
    From (\ref{eq-2a-1}), we have $|X_i| + |Y_i| \geqslant 1$ and from (\ref{eq-2a-2}), we have $|X_i| + |Y_i| \geqslant |Y_i^*| - 1.$
    
    \item If $r^* \in X_i^*$, then $r^*$ is not available to $\sbp'$ and only $A_i$, $X_i$, $Y_i$, and $Y_i^*$ are available to $\sbp'$. Therefore we know that
    $rev(X_i)+rev(Y_i)\geqslant rev(Y_i^*)$
    since otherwise, by its greediness,  $\sbp'$ would have chosen to serve $A_i$ and $Y_i^*$ instead of $A_i$, $X_i$ and $Y_i$, because $A_i$ and $Y_i^*$ can be served within one time segment.
    Therefore, we have
    $|X_i| + |Y_i| \geqslant |Y_i^*|.$
    \item $r^* \in A_i$. Then $r^*$ is served by both $\opt$  and $\sbp'$. 
    We know that $A_i\cup Y_i^* \backslash \{r^*\}$ can be served within one time segment since $r^*$ is the only request that causes $S_i^*$ to straddle between two time segments.  
    Again by the greediness of $\sbp'$, we have $rev(A_i)+rev(X_i)+rev(Y_i)\geqslant rev(A_i)+rev(Y_i^*)-rev(r^*)$
    which means
    $rev(X_i)+rev(Y_i)\geqslant rev(Y_i^*)-rev(r^*)$ and
    $|X_i|+|Y_i|\geqslant |Y_i^*|-1$.
    \end{enumerate}
    \end{enumerate}
    
    Therefore, for all cases, for window $i$, we have $
        |X_i|+|Y_i|\geqslant|Y_i^*|-1$, which means
     $   |Y_i^*|-|X_i| \leqslant 1+|Y_i|$,
    and with  $m=\lceil f/2 \rceil -1$, 
    \begin{equation}
        \label{hole-eq-3}
        \sum_{i=1}^{m}(|Y_i^*|-|X_i|) \leqslant m+ \sum_{i=1}^{m}|Y_i|.
    \end{equation}
    
    Now we will build an infeasible schedule $\overline{\sbp}$ that will be similar to $\sbp'$ but contain additional ``copies'' of some requests such that no windows of $\overline{\sbp}$ contain holes, i.e. such that
 $       rev(\overline{\sbp}) \geqslant \sum_{i=1}^{m}rev(S_i^*). 
  $

    We define a modified $\opt$ schedule which we refer to as $\opt'$ such that $\opt'=\cup_{i=1}^{m} S_i^*$ and observe that
    $    rev(\opt')=\sum_{i=1}^{m} |A_i| + \sum_{i=1}^{m} |X_i^*| + \sum_{i=1}^{m} |Y_i^*|
    $
   , while
     $   rev(\sbp')=\sum_{i=1}^{m} |A_i| + \sum_{i=1}^{m} |X_i| + \sum_{i=1}^{m} |Y_i|.
$
   
    By Lemma \ref{lemma-a-4} and equation (\ref{hole-eq-3}), we can say
    \begin{eqnarray}
        rev(\opt')-rev(\sbp')&=& \sum_{i=1}^{m} |Y_i^*|- \sum_{i=1}^{m} |X_i| + \sum_{i=1}^{m} |X_i^*| - \sum_{i=1}^{m} |Y_i|\\
        \label{hole-eq-5}
    &\leqslant& \sum_{i=1}^{m} |Y_i^*|- \sum_{i=1}^{m} |X_i|\leqslant m+ \sum_{i=1}^{m} |Y_i|.
    \end{eqnarray}
Inequality (\ref{hole-eq-5})
tells us that to form an $\overline{\sbp}$ whose revenue is at least that of $\opt'$, we must “compensate” $\sbp'$ by adding to it at most copies of all requests in the set $Y_i$ for all $i=1,2,...,m$, plus $m$ “dummy requests.” In other words, 
    \begin{equation}
        \label{hole-eq-7}
        rev(\overline{\sbp})=rev(\sbp')+m+\sum_{i=1}^{m} |Y_i| \geqslant rev(\opt').
    \end{equation}
    We know the total revenue of all $Y_i$ can not exceed the total revenue of $\sbp'$, hence we have
    \begin{equation}
        \label{hole-eq-8}
        rev(\overline{\sbp})=rev(\sbp')+m+\sum_{i=1}^{m} |Y_i| \leqslant 2rev(\sbp')+m.
    \end{equation}
    Combining (\ref{hole-eq-7}) and (\ref{hole-eq-8}), we get 
 $       rev(\opt') \leqslant 2rev(\sbp')+m$, which means
    \begin{equation}
        \label{hole-eq-9}
        \sum_{i=1}^{m} rev(S_i^*)\leqslant 2\sum_{i=1}^{m} rev(S_i') + m.
    \end{equation}
    Recall that $S_i^*$ is the set of requests served by $\opt$ during the time segment of window $i$ with greater revenue. In other words,
$        \sum_{j=1}^{2m} rev(S^*(t_j)) \leqslant 2\sum_{i=1}^{m} rev(S_i^*), $
 which, combined with  (\ref{hole-eq-9}), gives us
    \begin{equation}
        \label{hole-eq-11}
        \sum_{j=1}^{2m} rev(S^*(t_j)) \leqslant 4\sum_{i=1}^{m} rev(S_i') + 2m.
    \end{equation}
        We assumed that the total revenue of requests served in the last two time segments by $\opt$ is bounded by $c$. From (\ref{hole-eq-11}), we get
    \begin{equation}
           \label{hole-eq-12}
        \sum_{j=1}^{f} rev(S^*(t_j)) \leqslant\sum_{j=1}^{2m} rev(S^*(t_j)) + rev(S^*(t_{f-1})) + rev(S^*(t_f)) \leqslant 4\sum_{i=1}^{m} rev(S_i') + 2m + c.
    \end{equation}
    We also know that the total revenue of requests served by $\sbp'$ during the first $m$ windows is less than or equal to the total revenue of $\sbp$. Therefore, from (\ref{hole-eq-12}), we have
   $     \sum_{j=1}^{f} rev(S^*(t_j)) \leqslant 4\sum_{j=1}^{f} rev(S(t_j)) + 2m + c.$
\end{proof}



\section{ Bipartite graphs}
\label{sec:bipartite}
In this section, we consider ROLDARP for complete bipartite graphs $G=(V=V_1\cup V_2,E)$, 
where only nodes in $V_1$ maybe be source nodes and only nodes in $V_2$ may be destination nodes.  
One node is designated as the
origin and there is an edge from this node to every node in $V_1$ (so the origin is a node in $V_2$). Due strictly to space limitations, most proofs of theorems in this section are deferred to Appendix \ref{app:bipartite}.

We refer to this problem as ROLDARP-B and the offline version as RDARP-B. We first show that if edge weights of the bipartite graph are not bounded by a minimum value, then the offline version of ROLDARP on general graphs, which we refer to as RDARP, reduces to RDARP-B. Since RDARP has been show in \cite{atmos17,twochain} to be NP-hard (even if revenues are uniform), this means RDARP-B is NP-hard as well.

\begin{theorem}\label{thm:roldarpb}
    The problem RDARP is poly-time reducible to RDARP-B. Also, RDARP with uniform revenues is poly-time reducible to RDARP-B with uniform revenues. 
\end{theorem}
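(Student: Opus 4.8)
The plan is a node-splitting reduction that preserves, exactly, which sets of requests are servable. Starting from an RDARP instance on the (preprocessed, hence complete and metric) graph $G=(V,E)$ with origin $o$, time limit $T$, and request sequence $\sigma$, I would build the complete bipartite graph on $V_1=\{v_1:v\in V\}$ and $V_2=\{v_2:v\in V\}$, designating $o_2$ as the origin. Let $M$ be a sufficiently large integer (e.g.\ $M=|\sigma|+1$; clear denominators first if the input data is rational). Give edge $(u_1,v_2)$ weight $M\,w_{u,v}$ when $u\neq v$ and weight $1$ when $u=v$, set the new time limit to $T'=MT+|\sigma|$, and replace each request $(s,d,t,p)$ of $\sigma$ by $(s_1,d_2,Mt,p)$. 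Revenues are copied verbatim, so uniform-revenue instances map to uniform-revenue instances, and the construction is clearly polynomial. The structural observation driving everything is that a server route in the bipartite graph alternates between $V_2$ and $V_1$, so serving the images of the original requests in a fixed order amounts to following the \emph{same} sequence of nodes, where each original move of length $w$ now costs $Mw$ --- except that a move into a source that coincides with the server's current location (which is free in $G$) now costs $1$.

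The crux is the claim that a request set $R$ is servable by time $T$ in $G$ iff its image is servable by time $T'$ in the bipartite instance; since revenues are untouched this forces the two optima to coincide, and the theorem then follows from the NP-hardness of RDARP (including with uniform revenues) from~\cite{atmos17,twochain}. For the forward direction I would take an optimal order and the greedy timing of $R$ in $G$, run the same order in the bipartite graph, and observe that this adds at most $1$ per move-to-a-source, hence at most $|\sigma|$ in total, so the route finishes by $MT+|\sigma|=T'$, while the rescaled release times only get easier to meet because every event is merely delayed. For the converse I would use the triangle inequality to argue that an optimal bipartite route for $R$ is ``direct'' ($o_2\to s^{(1)}_1\to d^{(1)}_2\to s^{(2)}_1\to\cdots$), so its travel time is $MW+e$ where $W$ is an integer (the total of the original weights of the non-unit edges used) and $e\le|\sigma|$ counts the unit edges used; from $MW+e\le MT+|\sigma|$ and $M>e$ one gets $W\le T$. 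Contracting the unit edges projects the route to an $R$-route in $G$ of length $W\le T$, and an induction comparing the two greedy schedules shows the original start times are at most $1/M$ times the bipartite ones, so the projected schedule still respects the unscaled release times and finishes by $T$.

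The step I expect to be the main obstacle --- and the reason for the slightly fussy parameters --- is precisely the handling of these zero-length moves: a bipartite graph cannot contain zero-weight edges, so the ``collapsed'' edges $v_1v_2$ must carry a small positive weight, and one must certify that the overhead they accumulate can never change whether a set is servable. The choice $M=|\sigma|+1$ and the additive slack $|\sigma|$ in $T'$ are calibrated to soak up this overhead, and the integrality of the rescaled weights is what lets the estimate $MW+e\le MT+|\sigma|$ be rounded down to $W\le T$. The triangle inequality (available after the standard preprocessing of $G$) is the other ingredient: it is what makes direct bipartite routes optimal, so that bipartite routes are genuinely just the images of $G$-routes and no new servable sets are created by clever detours.
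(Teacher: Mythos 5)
Your reduction is essentially the same as the paper's: split each node $v$ into $v_1\in V_1$ and $v_2\in V_2$ joined by a negligibly short connector edge, duplicate each original edge across the bipartition, map each request $(s,d,t,p)$ to $(s_1,d_2,\cdot,p)$, and absorb the connector-edge overhead into a slightly enlarged time limit. The only difference is cosmetic --- the paper gives connectors weight $\epsilon$ and adds slack $\delta=T\epsilon$, whereas you rescale the original weights by a large integer $M$ and give connectors weight $1$ with slack $|\sigma|$; your treatment of the converse direction (directness of optimal bipartite routes via the triangle inequality, and the integrality argument rounding $MW+e\le MT+|\sigma|$ down to $W\le T$) is somewhat more explicit than the paper's, but the underlying construction and correctness argument are the same.
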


\begin{proof}[Proof idea.]  
The idea of the reduction is to split each node into two nodes connected by an edge in the bipartite graph with a distance of $\epsilon$. Then we turn each edge in the original graph into two edges in the bipartite graph.  Please see Appendix for \ref{app:bipartite} details.
\end{proof}


\subsection{ Uniform revenue bipartite}
We show that for bipartite graph instances, if revenues are uniform, we can guarantee that $\sbp$ earns a fraction of $\opt$ equal to the ratio between the  minimum and maximum edge-length.
\begin{theorem}
    \label{thm:r-uniform-comp}
    For any instance of ROLDARP-B where the revenues are uniform for all requests, if edge weights are upper and lower bounded by $T/f$ and $kT/f$, respectively, for some constant $0<k\leqslant 1$, then $$rev(\opt)\leqslant \lceil 1/k \rceil \cdot rev(\sbp) + \lceil 1/k \rceil.$$
\end{theorem}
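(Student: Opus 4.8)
The plan is to exploit the structural fact that on a complete bipartite graph every request moves from a node of $V_1$ to a node of $V_2$, and that any empty move between serving two requests must go from a $V_2$ node back to a $V_1$ node. Since edge weights lie in $[kT/f, T/f]$, a single request plus the empty move to reach the next source costs between $2kT/f$ and $2T/f$; the key consequence is that within one time segment of length $T/f$, the server — whether $\sbp$ or $\opt$ — can serve at most $\lceil 1/(2k)\rceil$ requests, and moreover the auxiliary graph $G'$ that $\sbp$ searches has the property that the max-revenue path it returns is within a controllable factor of what $\opt$ achieves in any window. Because revenues are uniform, revenue equals cardinality throughout, so the whole argument reduces to counting requests served per time segment.

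First I would set up the window/$\sbp'$ machinery exactly as in the proof of Theorem~\ref{4-comp-theorem}: pair the $f$ time segments into $\lceil f/2\rceil$ windows, define $\sbp'$ as the schedule that serves the same requests as $\sbp$ one window earlier, and let $S_i'$ be the requests $\sbp'$ serves in window $i$ and $S_i^*$ be the larger-revenue segment of $\opt$ in window $i$. Next I would argue the crucial per-segment bound: in the bipartite setting with minimum edge weight $kT/f$, any set of requests $\opt$ serves within a single time segment can be \emph{partitioned} into at most $\lceil 1/k\rceil$ sub-chains, each of which fits within one time segment and each of which is available to $\sbp'$ (after removing the requests $\sbp'$ already served in earlier windows, à la the set $B$ / $X_i^*$ from the earlier proofs). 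The bound $\lceil 1/k\rceil$ comes from the fact that a feasible chain serving $p$ requests uses at least $(2p-1)kT/f$ time within a segment bounded by $T/f$ in $\sbp$'s search but the chain that straddles must be accounted separately — I would be careful here and likely get a partition into $\lceil 1/k\rceil$ pieces by a greedy cut of $\opt$'s chain wherever the accumulated length would exceed $T/f$. Since $\sbp'$ greedily takes the single best feasible chain available in that window, it earns at least a $1/\lceil 1/k\rceil$ fraction of $rev(S_i^*)$ for every window, i.e. $rev(S_i^*)\le \lceil 1/k\rceil\, rev(S_i')$.

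Then I would sum over windows: $\sum_{i} rev(S_i^*)\le \lceil 1/k\rceil \sum_i rev(S_i')\le \lceil 1/k\rceil\, rev(\sbp)$, using that $\sbp'$ earns at most what $\sbp$ does. To pass from $\sum_i rev(S_i^*)$ back to $rev(\opt)=\sum_{j=1}^f rev(\opt(t_j))$, note $S_i^*$ is the \emph{better} of the two segments in window $i$, so $\sum_{j=1}^{f} rev(\opt(t_j)) \le 2\sum_i rev(S_i^*)$ — wait, that would cost a factor $2$; instead, because the partition argument above can be applied directly to the \emph{union} of both segments in a window (total length $2T/f$, giving a partition into at most $2\cdot\lceil 1/(2k)\rceil \le \lceil 1/k\rceil+1$, or more cleanly $\lceil 1/k \rceil$ with the straddle handled via the $+\lceil 1/k\rceil$ additive slack), I would fold the whole window of $\opt$ against the single chain $\sbp'$ serves, absorbing the boundary/straddling request into the additive $\lceil 1/k\rceil$ term, mirroring how the $+m$ dummy-request term arose in Theorem~\ref{4-comp-theorem}. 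This yields $rev(\opt)\le \lceil 1/k\rceil\, rev(\sbp) + \lceil 1/k\rceil$.

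The main obstacle I anticipate is pinning down the exact partition constant and the additive term: one must show cleanly that an arbitrary time-segment-feasible chain of $\opt$ (possibly including a request straddling the segment boundary, and possibly including requests $\sbp'$ consumed in earlier windows) decomposes into at most $\lceil 1/k\rceil$ pieces each genuinely \emph{available} and \emph{feasible} for $\sbp'$ in that window, so that $\sbp'$'s greedy choice dominates $1/\lceil 1/k\rceil$ of it. Handling the already-served requests (the analogue of $B$) without losing an extra factor, and making sure the straddling request only costs an additive $O(\lceil 1/k\rceil)$ over all windows rather than multiplicatively, is where the care is needed; the bipartite structure helps because empty moves and request moves strictly alternate sides, making the length bookkeeping on each chain exact.
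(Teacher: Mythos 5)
Your approach diverges substantially from the paper's, and as sketched it would not deliver the stated bound. The paper's proof is a direct counting argument that never needs the partition/greedy-dominance machinery: it counts ``$k$-units'' (blocks of length $kT/f$) over an entire \emph{window} of length $2T/f$, observing that requests and the empty moves separating them each consume at least one $k$-unit and must alternate, so \emph{any} schedule (in particular $\opt$) serves at most $\lceil \lceil 2/k\rceil /2\rceil = \lceil 1/k\rceil$ requests per window. Because revenues are uniform, that is the whole multiplicative story: if $\sbp$ serves at least one request in every window, the ratio is immediately $\lceil 1/k\rceil$. Otherwise one takes the \emph{last} empty window $w$; since $\sbp$ idles in $w$, every request released before $w$ has already been served by $\sbp$ (say $b$ of them), so $\opt$ earned at most $b$ before $w$, at most $\lceil 1/k\rceil$ during $w$, and at most $\lceil 1/k\rceil$ per window after $w$ while $\sbp$ earns at least one per window after $w$. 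The single additive $\lceil 1/k\rceil$ comes from that one window $w$ alone.

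The concrete gaps in your route are these. First, your per-segment count of $\lceil 1/(2k)\rceil$, doubled and with straddlers added, is strictly looser than the paper's per-window $k$-unit count (e.g.\ for $k=0.4$ you get $4$ requests per window versus the correct $\lceil 1/k\rceil=3$), so the multiplicative constant already degrades. Second, your plan to compare $\opt$'s window against $\sbpprime$'s greedy chain via a partition into $\lceil 1/k\rceil$ available pieces inherits the two costs of the earlier theorems: the straddling request forces a $+1$ \emph{per window} (the $+m$ term of Theorem~\ref{4-comp-theorem}), and the requests $\sbpprime$ already served in earlier windows (the $B$/$X_i^*$ set) must be re-added at a cost of up to $rev(\sbp)$, worsening the ratio to $\lceil 1/k\rceil+1$. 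Absorbing both into a single additive $\lceil 1/k\rceil$, as you propose, is not justified; summed over windows they are linear in $f$. The observation you are missing is that uniformity makes greedy dominance unnecessary: $\sbp$ serving \emph{any} one request in a window already matches $\opt$'s hard cap of $\lceil 1/k\rceil$ up to the claimed factor, so neither the partition lemma, nor the availability bookkeeping, nor $\sbpprime$ is needed here.
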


\begin{proof}[Proof idea.]
The proof idea is akin to that of Theorem \ref{thm:r-nonuniform-comp} below. 
Please see Appendix~\ref{app:bipartite} for the details.
\end{proof}

\subsection{ Nonuniform revenue bipartite}\label{sec:bipartite-nonuniform}

In this section we show that even if revenues are nonuniform, we can still guarantee that $\sbp$ earns a fraction of $\opt$ equal to the ratio between the minimum and maximum edge-length, minus the revenue earned by $\opt$ in the last window.  Recall that we refer to each pair of consecutive time segments as a time window. 
Note that \textit{no} non-preemptive deterministic online algorithm can be competitive with any fraction of the revenue earned by an optimal offline solution in the last $2T/f$ time units (i.e. Lemma~\ref{lem:last2} also holds for ROLDARP-B with nonuniform revenues).

\begin{theorem}
    \label{thm:r-nonuniform-comp}
    For any instance of ROLDARP-B where the revenues of requests are nonuniform, if edge weights are upper and lower bounded by $T/f$ and $kT/f$, respectively, for some constant $0<k\leqslant 1$, and if the revenue earned by $\opt$ in the last 
    time window
    is bounded by some constant $c$, then $$rev(\opt)\leqslant \lceil  1/k \rceil \cdot rev(\sbp) + c$$
\end{theorem}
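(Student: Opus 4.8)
The plan is to reuse the window‑pairing and shifted‑schedule machinery developed for Theorems~\ref{onvsopt} and~\ref{4-comp-theorem}, replacing their ``factor‑two‑for‑straddling'' step with a geometric argument that exploits the minimum edge weight $kT/f$. As there, group the $f$ time segments into $\lceil f/2 \rceil$ windows of two consecutive segments each, and consider the (infeasible) schedule $\sbp'$ that serves exactly what $\sbp$ serves but one window earlier, so $rev(\sbp') \le rev(\sbp)$. Let $B$ be the set of requests that $\opt$ serves in some window after $\sbp'$ has already served them, and set $\overline{\opt} = \opt - B$; then $rev(B) \le rev(\sbp')$, and every request served by $\overline{\opt}$ in a window $i \le \lceil f/2\rceil - 1$ is released and still unserved by $\sbp'$ at the start of that window, since $\sbp'$ makes its window‑$i$ choice one window after $\opt$'s window $i$.

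The key new ingredient is a counting fact specific to the bipartite structure: \emph{in a complete bipartite instance with minimum edge weight $kT/f$, any feasible schedule serves at most $\lceil 1/k \rceil$ requests during a single time window.} Because sources lie in $V_1$ and destinations in $V_2$, serving $m$ requests in succession forces the server to traverse $m$ ``request'' edges and $m-1$ ``connecting'' edges, each of length at least $kT/f$; since a window has length $2T/f$, we get $(2m-1)\,kT/f \le 2T/f$, hence $m \le \lceil 1/k \rceil$ (modulo the treatment of the one request that may straddle a window boundary). Moreover, a single request is always servable within a single time segment, as its source‑to‑destination edge has length at most $T/f$. Consequently, for every window $i \le \lceil f/2\rceil - 1$, the maximum‑revenue request of $\overline{\opt}$'s window‑$i$ set is itself a candidate set available to $\sbp'$ at the start of window $i$, so by greediness $rev(\overline{\opt}_i) \le \lceil 1/k\rceil \cdot rev(\sbp'_i)$, where $\overline{\opt}_i$ and $\sbp'_i$ denote the respective window‑$i$ request sets.

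Summing over windows $1,\dots,\lceil f/2\rceil - 1$ gives $rev(\overline{\opt}) \le \lceil 1/k\rceil \cdot rev(\sbp') + (\text{revenue } \opt \text{ earns in its last window}) \le \lceil 1/k\rceil \cdot rev(\sbp) + c$. It then remains to fold $B$ back in without inflating the coefficient of $rev(\sbp)$ past $\lceil 1/k\rceil$: every request of $B$ is served by $\sbp'$ no later than the window in which $\opt$ serves it, so it can be charged against a slot it actually occupies inside the relevant $\sbp'$‑window; arranging this so that the total charged to any one $\sbp'$‑window stays within its $\lceil 1/k\rceil$‑copy budget, instead of costing an additional additive $rev(\sbp)$, is the delicate point and is where I expect the real difficulty. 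The same subtlety occurs, in uniform‑revenue form, in Theorem~\ref{thm:r-uniform-comp}, where it is absorbed into the additive $\lceil 1/k\rceil$. Besides this, the only thing to pin down is the exact per‑window request‑count bound including the boundary‑straddling request; the remainder is a routine re‑indexing of the inequality chain from Theorems~\ref{onvsopt} and~\ref{4-comp-theorem}.
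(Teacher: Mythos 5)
Your window-counting step (at most $\lceil 1/k\rceil$ requests per window, because each request edge and each connecting move costs at least $kT/f$) matches the argument the paper gives for Theorem \ref{thm:r-uniform-comp}, and your per-window inequality $rev(\overline{\opt}_i)\le \lceil 1/k\rceil\cdot rev(\sbp'_i)$ --- a single request always fits in one segment, so greediness beats the best single available request --- is sound as far as it goes. But the step you yourself flag as ``the delicate point,'' folding $B$ back in without paying an extra additive $rev(\sbp)$, is a genuine unresolved gap, and it is not routine bookkeeping: it is essentially the entire technical content of the paper's proof of this theorem. The difficulty is this: once you delete $B$ from $\opt$, the surviving set $\overline{\opt}_i$ may no longer contain the maximum-revenue request of $\opt$'s window $i$ (that request may itself lie in $B_i$); and if instead you keep $B_i$ so as to write $rev(\opt_i)\le\lceil 1/k\rceil\cdot\rho_{\max,i}$, the request attaining $\rho_{\max,i}$ may already have been served by $\sbp$ in an earlier window and hence be unavailable, so per-window greediness no longer yields $rev(\sbp_{i+1})\ge\rho_{\max,i}$. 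Your ``charge each $B$-request to a slot it occupies'' idea does not obviously close this, because the multiplicative $\lceil 1/k\rceil$ budget of a window is already fully consumed by charging $\lceil 1/k\rceil$ copies of a single request. Without a fix you only obtain $rev(\opt)\le(\lceil 1/k\rceil+1)\cdot rev(\sbp)+c$.

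The paper sidesteps the $B$-decomposition entirely: its hypothetical $\sbp'$ serves, in window $i$, exactly the maximum-revenue request of $\opt$'s window $i-1$ \emph{regardless of availability}, which immediately gives $rev(Q_i^*)\le\lceil 1/k\rceil\cdot rev(Q_{i+1}')$ per window; the burden then shifts to proving the global inequality $rev(\sbp')\le rev(\sbp)$, which is done via two substantial lemmas (Lemma \ref{land-of-oz}, comparing the sorted revenue lists of $\sbp'$ and an intermediate greedy schedule $\sbpdb$, and Lemma \ref{lem:conclude-sbpdb}, an induction over window subsets showing $rev(\sbpdb)\le rev(\sbp)$), plus a separate case analysis for windows in which $\sbp$ serves nothing. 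Some cross-window, global argument of this kind is exactly what your proposal is missing.
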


\begin{proof}[Proof of Theorem \ref{thm:r-nonuniform-comp}]
Again, we refer to each pair of consecutive time segments as a time window. We consider a hypothetical schedule which we refer to as $\sbp'$ that proceeds as follows.  
     In the first time window, $\sbp'$ does nothing.
     In the $i^{th}$ window ($2 \leqslant i\leqslant \lceil f/2 \rceil$),  
        $\sbp'$ serves exactly one request: the maximum revenue request served by $\opt$ in the $(i-1)^{th}$ window. (In Lemmas~\ref{land-of-oz} and \ref{lem:conclude-sbpdb} of Appendix \ref{app:bipartite} we show that the revenue earned by $\sbp'$ is no greater than the revenue earned by $\sbp$.)
        Let $Q_i$, $Q_i'$, and $Q_i^*$ denote the sets of requests served by $\sbp$, $\sbp'$, and $\opt$, respectively, in window $i$. There are two cases based on the performance of $\sbp$.

Case 1: $\sbp$ serves at least one request per window.  
    Again let $\mu = \lceil f/2 \rceil$ denote the total number of time windows. 
    Let $r=\lceil 1/k \rceil$. We know from Theorem \ref{thm:r-uniform-comp} that $\opt$ can serve at most $r$ requests per window. 
    We assume without loss of generality that $\opt$ serves exactly $r$ requests per window and
    let $\rho_i = \rho_{i,1}, \rho_{i,2},...,\rho_{i,r}$ denote the $r$ revenues earned by $\opt$ in window $i$.
    Consider the first window of $\opt$ and the second window of $\sbp'$.
    In the first window, $\opt$ earns revenues $\rho_1 = \rho_{1,1}, \rho_{1,2},...,\rho_{1,r}$. In the second window, $\sbp'$  serves the maximum revenue request from $\rho_1$. Therefore, 
      $  rev(Q_1^*)=\sum_{k=1}^{r}\rho_{1,k}\leqslant r \cdot \max \{ \rho_1 \}$
    and
       $ rev(Q_2')=\max \{ \rho_1 \}.$
So we have
    $rev(Q_1^*)\leqslant r\cdot rev(Q_2').$
    Similarly, we have
    $    rev(Q_i^*)\leqslant r\cdot rev(Q_{i+1}') \text{ for all } i=1,2,...,\mu -1.$
    Summing up 
    for all $i=1,2,...,\mu -1$, we know
    \begin{eqnarray}
        \label{eq.4}
        \sum_{i=1}^{\mu -1}rev(Q_i^*)\leqslant r\sum_{i=2}^{\mu}rev(Q_i').
    \end{eqnarray}
    From Lemmas~\ref{land-of-oz} and \ref{lem:conclude-sbpdb} of Appendix \ref{app:bipartite} 
    we know the right-hand-side of (\ref{eq.4}) is no more than the total revenue earned by $\sbp$ during all $\mu$ windows, therefore
      $  \sum_{i=1}^{\mu -1}rev(Q_i^*)\leqslant r\sum_{i=1}^{\mu}rev(Q_i)$.
    Since the revenue earned by $\opt$ in the last (i.e. $\mu ^{th}$) window is bounded by a constant $c$, 
    $\sum_{i=1}^{\mu}rev(Q_i^*)\leqslant r\sum_{i=1}^{\mu}rev(Q_i) + c.$  In other words,
    $rev(\opt)\leqslant r \cdot rev(\sbp) + c = \lceil1/k \rceil \cdot rev(\sbp) + c.$
    
   Case 2: There may be empty windows (i.e. windows where $\sbp$ serves nothing). 
    Let $w$ denote the last empty window that occurred during the entire time limit and let $\tau$ denote the start time of window $w$. We analyze the requests served before, during, and after $w$.
    \begin{itemize}
    \item  Before window $w$:
since $\sbp$ serves nothing during window $w$, we know that all requests released before time $\tau$ have been served by $\sbp$. Let $b$ denote the total revenue of these requests. We know that before $\tau$, $\opt$ could have earned revenue at most $b$. 
    \item During window $w$: $\opt$ earns revenue $\rho_{w,1}, \rho_{w,2},...,\rho_{w,r}$ and $\sbp$ earns nothing. 
    \item After window $w$: now we proceed by running $\sbp'$  which serves the maximum revenue request served in the previous window in the $\opt$ schedule. Similar to (\ref{eq.4}), we have
        $rev(Q_i^*)\leqslant r\cdot rev(Q_{i+1}') \text{ for all }i=w,w+1,...,\mu -1.$
    Summing up 
    for all $i=w,w+1,...,\mu -1$ yields
      $  \sum_{i=w}^{\mu -1}rev(Q_i^*)\leqslant r\sum_{i=w+1}^{\mu }rev(Q_i').$
    From Lemma~\ref{land-of-oz} (in the appendix) we know $r\sum_{i=w+1}^{\mu }rev(Q_i')$ 
    is no more than the total revenue earned by $\sbp$ during all windows after window $w$, therefore 
   $     \sum_{i=w}^{\mu -1}rev(Q_i^*)\leqslant r\sum_{i=w+1}^{\mu }rev(Q_i).$
    Since the revenue earned by $\opt$ in the last (ie. $\mu^{th}$) window is bounded by a constant $c$, we have
    \begin{equation}
        \label{eq.10}
        \sum_{i=w}^{\mu }rev(Q_i^*)\leqslant r\sum_{i=w+1}^{\mu }rev(Q_i) + c.
    \end{equation}
        \end{itemize}
    So
    \begin{equation}
        \label{eq.11}
        rev(\opt) = \sum_{i=1}^{\mu }rev(Q_i^*) \le b + \sum_{i=w}^{\mu }rev(Q_i^*) \le b+ r\sum_{i=w+1}^{\mu }rev(Q_i) + c.
    \end{equation}
  and
    \begin{eqnarray}
        \label{eq.12}
        rev(\sbp) = \sum_{i=1}^{\mu }rev(Q_i) = b +0+ \sum_{i=w+1}^{\mu }rev(Q_i).
    \end{eqnarray}
    Combining \ref{eq.11} and \ref{eq.12} we have:
    \begin{equation}
        \label{eq.14}
        \frac{rev(\opt) - c}{rev(\sbp)} \leqslant \frac{b+r\sum_{i=w+1}^{\mu }rev(Q_i)}{b+\sum_{i=w+1}^{\mu }rev(Q_i)}
       \leqslant \frac{rb+r\sum_{i=w+1}^{\mu }rev(Q_i)}{b+\sum_{i=w+1}^{\mu }rev(Q_i)} = r.
    \end{equation}
Which means   
       $rev(\opt)\le \lceil 1/k \rceil \cdot rev(\sbp)+c.$
\end{proof}

\newpage

\section{ Appendix}

In this section we provide all proofs missing from the main body of the paper. Most of these proofs were deferred to this section due strictly to space limitations.
\subsection{ Proof from Preliminaries Section}\label{app:prelim}
\begin{proof}[Proof of Lemma \ref{lem:last2}]
    Consider the following instance for $f\geq 2$ for some non-preemptive deterministic algorithm \textsc{alg}. The adversary releases a request $(s, d, T-2T/f, 1)$ where the distance between $s$ and $d$ is $T/f$ and it takes $T/f$ time to travel between \textsc{alg}'s server location at time $T-2T/f$ and $s$. 
    If \textsc{alg} chooses not to serve the request, then the adversary releases no more requests, so \textsc{alg} earns 0 while $\opt$ serves the request and earns 1. If \textsc{alg} serves the request, it moves to $s$ for $T/f$ time units, then serves the request until $T$, and earns revenue 1.  During this time, the adversary releases a request $(a, b, T-2T/f+\delta, k)$, for some small $\delta$ and an arbitrarily large $k$, where $a$ is the location of an optimal server, $\textsc{opt}$, at time $T-2T/f$. The $\textsc{opt}$ solution serves this request earning revenue $k$, so  $\frac{\textsc{opt}}{\textsc{alg}} = k$. 
    
    For the case of uniform revenues, we simply modify the 
    above instance so that at time $T-2T/f + \delta$ for some small $\delta$ (i.e. while \textsc{alg} is serving the $(s, d, T-2T/f, 1)$ request), the adversary releases $k$ requests $r_1, r_2, \ldots, r_k$ such that the source of $r_1$ is the location of $\opt$ at time $T-2T/f$, the distance between the source and destination of each request is $\epsilon$ for some small $\epsilon$, each request has revenue 1, and the sequence of requests can be served in a chain (i.e. with no intermediary moves in between). The $\textsc{opt}$ solution serves the $k$ requests earning revenue $k$, so  $\frac{\textsc{opt}}{\textsc{alg}} = k$. 
\end{proof}


\begin{proof}[Proof of Theorem \ref{thm:genLB} for nonuniform revenues] 
Consider the following instance with $f=5$ (so there are 5 time segments of length $T/f$). For simplicity, we let $X=T/f \ge 3$ denote the length of a time segment and therefore the maximum distance between two locations, so $T=5X$. All distances are $X$ unless otherwise stated. 
Let $\opt$ denote an optimal schedule, let $\on$ denote a deterministic online algorithm and let $a_0$ denote the origin, i.e. the location of $\on$ and $\opt$ at time 0. The adversary releases requests $r_1 = (a_1, b_1, X, \epsilon)$ and  $r_2 = (a_2, b_2, X, \epsilon)$. Let $d(u, v)$ denote the distance between locations $u$ and $v$. 

\begin{enumerate}
    \item Case: $\on$ does not ever visit $a_1$ or $a_2$. Then the adversary releases no more requests, so $\on$ earns 0 while $\opt$ serves one of the requests and earns $\epsilon$ within the first $T-2T/f$.
    \item Case: $\on$ moves from its location at time $t_1 >X$ to either $a_1$ or $a_2$.
    Note $\on$ has not earned any revenue yet.  
    \begin{enumerate}
        \item Case: $ X <t_1 < 2X$. 
        Since the following release time is after $t_1$, we may assume w.l.o.g. that $\on$ is at $a_1$ at time $t_1+X$. Then the adversary releases request $r_3=(b_2,c_2,2X,\epsilon)$. 
        When $\on$ arrives at $a_1$ there is fewer than $3X$ units of time remaining 
        so 
        there is insufficient time for $\on$ to serve more than one request.
 \item Case: $ t_1 = 2X$. 
        Since the following release time is after $t_1$, we may assume w.l.o.g.  that $\on$ is at $a_1$ at time $t_1+X$. Then the adversary releases request $r_3=(c_2,d_2,2X+1,\epsilon)$,
 where $d(b_2,c_2)=1$ and $d(c_2,d_2)=X-1$.
        When $\on$ arrives at $a_1$, there is $2X$ time  remaining.  There is insufficient time for $\on$ to serve more than one request. 
\item Case: $t_1 > 2X$. When $\on$ arrives at $a_1$ or $a_2$ at time
$t_1+X > 3X$, there is $< 2X$ time remaining. The adversary releases request $r_3=(c_2,d_2,2X+1,\epsilon)$,
 where $d(b_2,c_2)=1$ and $d(c_2,d_2)=X-1$. 
It takes
at least $2X$ time for $\on$ to serve two requests from either $a_1$ or $a_2$ so there is insufficient time for $\on$ to serve more than one request.

\end{enumerate}
In cases 2(a)-2(c), 
there is 
insufficient time for $\on$ to serve two or more of $r_1, r_2$ and $r_3$ so $\on$ earns revenue at most $\epsilon$. On the other hand,
$\opt$ serves $r_2$ and $r_3$ by traversing
$a_0,a_2,b_2,c_2$ or $a_0,a_2,b_2,c_2,d_2$
in time $3X$ earning revenue $2\epsilon$, so $\frac{\opt}{\on} = 2$.

\item Case: $\on$ moves from its location at time $t_1 =X$ to either $a_1$ or $a_2$.
Since all future requests are released after time $X$, we can assume w.l.o.g. $\on$ moves to $a_1$ and arrives there at time $2X$. 
Then the adversary releases the requests:
  $r_3 = (a_3, b_3, X+ 1, 1)$,
  $r_4 = (a_4, b_4, X + 1, 1)$,
  and
  $r_5 = (c_3, d_3, 3X-1, 1)$ or $r_5=(c_4, d_4, 3X-1, 1)$, depending on when and where $\on$ moves.  
  Let $d(b_3,c_3)=d(b_4,c_4)=X-2$, and $d(c_3,d_3)=d(c_4,d_4)=1$.
 \begin{enumerate}
    \item Case: $\on$ serves $r_1$ or $r_2$. 
    Then there will be at most $2X$ time remaining. 
    So $\on$ can serve at most one additional request
    of revenue $1$ or $\epsilon$.
    So $\on$ earns at most $1 + \epsilon$. 
    \item Case: $\on$ does not serve $r_1$ or $r_2$ but  moves from $a_1$ at time $t_2$.
    Note that if $\on$ does not eventually move to one of $a_3, a_4, c_3, c_4$, then $\on$ would earn $0$.

    \begin{enumerate} 
    \item Case: $2X \le t_2 < 3X-1$. Since $r_5$ is released after $t_2$,
    we may assume w.l.o.g. that $\on$ will move to $a_3$ and $r_5 = (c_4, d_4, 3X-1, 1)$. 
    When $\on$ arrives at  $a_3$ at time $t_2+X \ge 3X$, there is $\le2X$ time remaining.
    But it takes at least time $2X+1$ for $\on$ to earn revenue $2$ (e.g. by traversing $a_3, b_3, c_4, d_4$ with
    other paths taking longer).
    Hence $\on$ earns at most revenue $1$.

 \item Case: $t_2 \ge 3X-1$. 
 Then $\on$ sees which $r_5$ request was released and can choose to
 head towards any of the locations such as $a_3, a_4, c_3, c_4$,
 and arrives there at time $t_2+X \ge 4X-1$,
 so there is $X+1$ remaining.
 W.l.o.g let $r_5= (c_4, d_4, 3X-1, 1)$.
 From any location, it will take $\on$ at least time
 $2X-1$ to earn revenue $2$ (e.g. by traversing $a_4, b_4, c_4, d_4$ with other paths taking longer).
 Since $X\ge3$, we have $2X-1 > X+1$,
 and so $\on$ can earn at most $1$.
\end{enumerate}
In all subcases of Case 3,  $\on$ earns at most revenue $1+\epsilon$.
On the other hand, $\opt$ serves $r_4$ and $r_5$ by traversing $a_0$,$a_4$, waiting time $1$, and then
traversing $b_4, c_4, d_4$, in total time $3X$ earning revenue 2, so $\frac{\opt}{\on} = 2/(1+\epsilon)$.
\end{enumerate}
\end{enumerate}
\end{proof}

\begin{proof}[Proof of Theorem \ref{thm:genLB} for uniform revenues] 
Consider the following instance with $f=5$ (so there are 5 time segments of length $T/f$). For simplicity, we let $X=T/f$ denote the length of a time segment and therefore the maximum distance between two locations, so $T=5X$. All distances are $X$ unless otherwise stated.
We let the uniform revenue be $1$.
Fix a positive integer $k$ and let
$0<\delta<X/(2k)$.

Let $\opt$ denote an optimal schedule, let $\on$ denote a deterministic online algorithm and let $a_0$ denote the origin, i.e. the location of $\on$ and $\opt$ at time 0. The adversary releases requests $r_1 = (a_1, a_2, X, 1)$ and  $r_2 = (b_1, b_2, X, 1)$. Let $d(u, v)$ denote the distance between locations $u$ and $v$. 
\begin{enumerate}
    \item Case: $\on$ does not ever visit $a_1$ or $b_1$. Then the adversary releases no more requests, so $\on$ earns 0 while $\opt$ serves one of the requests and earns $1$ within the first $T-2T/f$. 
    \item Case: $\on$ moves from its location at time $t_1 >X$ to either $a_1$ or $b_1$.
    Note $\on$ has not earned any revenue yet.
    \begin{enumerate}
        \item Case: $ X <t_1 < 2X$. 
        Since the following release time is after $t_1$, we may assume w.l.o.g. that $\on$ is at $a_1$ at time $t_1+X$; the adversary releases request $r_3=(b_2,b_3,2X,1)$. 
        When $\on$ arrives at $a_1$ there is $< 3X$ time  remaining
        so 
        there is insufficient time for $\on$ to serve more than one request.
 \item Case: $ t_1 = 2X$. 
        Since the following release time is after $t_1$, we may assume w.l.o.g.  that $\on$ is at $a_1$ at time $t_1+X$; the adversary releases request $r_3=(b_3,b_4,2X+\delta,1)$,
 where $d(b_2,b_3)=\delta$ and $d(b_3,b_4)=X-\delta$.
        When $\on$ arrives at $a_1$, there is $2X$ time  remaining.  There is insufficient time for $\on$ to serve more than one request. 
\item Case: $t_1 > 2X$. When $\on$ arrives at $a_1$ or $b_1$ at time
$t_1+X > 3X$, there is $< 2X$ time remaining. The adversary releases request $r_3=(b_3,b_4,2X+\delta,1)$,
 where $d(b_2,b_3)=\delta$ and $d(b_3,b_4)=X-\delta$. 
It takes
at least $2X$ time for $\on$ to serve two requests from either $a_1$ or $b_1$ so there is insufficient time for $\on$ to serve more than one request.

\end{enumerate}
In cases 2(a)-2(c), 
there is 
insufficient time for $\on$ to serve two or more of $r_1, r_2$ and $r_3$ so $\on$ earns revenue at most $1$. On the other hand,
$\opt$ serves $r_2$ and $r_3$ by traversing
$a_0,b_1,b_2,b_3$ (case 2(a))or $a_0,b_1,b_2,b_3,b_4$ (cases 2(b) and 2(c))
in time $3X$ earning revenue $2$, so $\frac{\opt}{\on} = 2$.

\item Case: $\on$ moves from its location at time $t_1 =X$ to either $a_1$ or $b_1$.
Since all future releases are released after time $X$, we can assume w.l.o.g. $\on$ moves to $a_1$ and arrives there at time $2X$. 
Then the adversary releases the requests:
  $r'_i = (c_{i-1}, c_{i}, X+\delta, 1)$,
  $r''_i = (d_{i-1}, d_{i}, X+\delta, 1)$,
  for $i=1,\ldots,k$,
  where
  $d(c_i,c_j)=d(d_i,d_j)=|i-j|X/k$.
  Depending on what $\on$ does,
  the adversary will also choose
  $w$ with $0<w\le X-\delta$ and release
  $\bar r_i = (e_{i-1}, e_{i}, 3X-w, 1)$ for $i=1,\ldots,k$,
  where
  $d(e_i,e_j)=|i-j|w/k$, and
  one of $d(e_0,c_k)=X-\delta-w$ or
  $d(e_0,d_k)=X-\delta-w$
  will be chosen.  
 \begin{enumerate}
    \item Case: $\on$ serves $r_1$ or $r_2$. 
    Then there will be at most $2X$ time remaining. 
    So $\on$ can serve at most $k$ additional requests from the $r',r'',\bar r$ family.
    So $\on$ earns at most $1 + k$. 
    \item Case: $\on$ does not serve $r_1$ or $r_2$ but  moves from $a_1$ at time $t_2$.
    Note that if $\on$ does not eventually move to one of $c_i,d_i,e_i$, then $\on$ would earn $0$.

    \begin{enumerate} 
    \item Case: $2X \le t_2 < 3X-\delta$. The adversary chooses $w=3X-t_2-\delta>0$ so that
    $3X-w = t_2+\delta$.
    Since $\bar r_i$ are released after $t_2$,
    we may assume w.l.o.g. that $\on$ will move to some $c_j$ and set
    $d(e_0,d_k)=X-\delta-w$
    and $d(e_0,c_k)=X$.
    When $\on$ arrives at  $c_j$ at time $t_2+X$, there is $4X-t_2$ time remaining.
    But it takes at least time $X/k+X+ w
    =4X-t_2-\delta+X/k>4X-t_2$ for $\on$ to earn revenue $k+1$ (e.g. by traversing $c_{k-1},c_k,e_0,e_1,
    \ldots,e_k$ with
    other paths taking longer).
    Hence $\on$ earns at most revenue $k$.

 \item Case: $t_2 \ge 3X-\delta$. 
 In this case,
 the adversary picks $w=\delta$ and
 releases the $\bar r_i$ requests at time $3X-\delta$
 and set $d(e_0,d_k)=X-\delta-w$
    and $d(e_0,c_k)=X$.
 Then $\on$  can choose to
 head towards any of the locations such as the sources of $r', r'',\bar r$,
 and arrives there at time $t_2+X \ge 4X-\delta$,
 so there is $X+\delta$ remaining.
 From any location, it will take $\on$ at least time
 $X/k+X-\delta-w+w$ to earn revenue $k+1$ (e.g. by traversing $d_{k-1},d_k, e_0,e_1,\ldots,e_k$ with other paths taking longer).
 But $X/k+X-\delta-w+w>X+\delta$ 
 because we chose $2\delta<X/k$.
 So $\on$ can earn at most $k$.
\end{enumerate}
In all subcases of Case 3,  $\on$ earns at most revenue $1+k$.
On the other hand, $\opt$ can traverse $a_0,d_0,d_1,\ldots,d_k,e_0,e_1,\ldots,e_k$, with pausing at $d_0$ until time $X+\delta$, in total time $3X$ to earn $2k$.
So $\frac{\opt}{\on} = 2k/(1+k)$.
\end{enumerate}
\end{enumerate}
\end{proof}

\subsection{ Proof of SBP lower bound} \label{app:lb}

We now present the formal proof of Theorem \ref{thm:lb} in Section \ref{sec:lb}.

\begin{proof}[Proof of Theorem \ref{thm:lb}]
Consider the instance depicted in Figure \ref{fig:lb5}.
Fix $f >2$ to be an even integer.
Fix $T>0$, such that the time segment length $T/f>1$, where distances are discretized so 1 is the smallest possible unit of distance, i.e. all distances are integer-valued. Let $h=T/(2f)$.  Assume further $h>1$.
Let $0<\epsilon<1$ be vanishingly small and let $B>0$. 

Let $o$ be the origin, with other points in the metric space being $u_i$ for $i=1,2,\ldots,f$ and 
$v_i$ for $i=1,2,\ldots,m$,  where $m$ will be determined below.  

The idea is that $\sbp$ will take the path $o,u_1,\ldots,u_f$ in time $T$ serving a single request of revenue $B+\epsilon$ every other time segment as prescribed by the algorithm.
Meanwhile, discounting the revenue earned in the last two time segments, $\opt$ will take the path $o,v_1,\ldots,v_m, u_2,...,u_{f-2}$ in time $T-2T/f=T-4h$.
The distances are shown below each edge in the figure:
$d(o,u_1)=1$,
$d(u_i,u_{i+1}) = 1$ for $i$ odd, $d(u_i,u_{i+1}) =h$ for $i$ even,
$d(o,v_1)=1$,
$d(v_i,v_{i+1})=h+1$, for $i=1,\ldots,m-1$, 
$d(v_m, u_2)=h+1$, and all other distances (not shown) are $h$. 

The requests are depicted as directed edges in the figure.  They are:
$(u_1,u_2, 0, \epsilon),\\
(u_{2i+1}, u_{2i+2}, 4ih, B+\epsilon)$ for $i=1,\ldots, f/2-1$,
$(u_{2i}, u_{2i+1}, 4ih+1, B)$ for $i=1,\ldots, f/2-1$,
$(v_i, v_{i+1}, 1, B)$ for $i=1,...,m-1$,
and $(v_m, u_2, 1, B)$.

Note that $\sbp$ will take the path $o,u_1,...,u_f$:\\ (1) At time $t=0$, $\sbp$ will choose drive $(o,u_1)$ followed by request $(u_1,u_2)$ because that is all that is available.\\
(2) For $k=1,\ldots,f/2-1$, at time $t = 4hk$ (the start time of a pair of time segments of length $2h=T/f$ each),
$\sbp$ is at vertex $u_{2k}$.
The available requests (that have not yet been served) are:
$(u_{2k+1}, u_{2k+2}, 4kh, B+\epsilon)$,
$(u_{2i}, u_{2i+1}, 4ih+1, B)$ for $i=1,..., k-1$, 
$(v_i, v_{i+1}, 1, B)$ for $i=1,...,m-1$, and $(v_m, u_2, 1, B)$.  Note that none of the requests of revenue $B$ along the top path arrive in time for $\sbp$ to serve more than a single request at a time.  Further, since we are looking for a revenue set that has path length at most $2h$, we cannot put together a path of length at most $2h$ that has 2 or more of these requests (since an edge from either $u_{2k+1}$ or $u_{2k+2}$ to any of the other vertices listed above has weight $h$ by design).
Thus a maximum revenue set chosen by $\sbp$ using a path of length at most $2h$ has only one request.
And a maximum revenue request would clearly be the request $(u_{2k+1}, u_{2k+2}, 4kh, B+\epsilon)$.
Thus $\sbp$ would drive $(u_{2k},u_{2k+1})$ at time $t=4kh$ followed by the request $(u_{2k+1},u_{2k+2})$ at time $t=4kh+2h$.
And  at time $t=4(k+1)h$, $\sbp$ would be at vertex $u_{2k+2}$.

Thus $\sbp$ earns revenue $\epsilon + (f/2 - 1)(B +\epsilon) = B(f-2)/2 +f\cdot\epsilon/2$.

We now consider $\opt$, but we disregard the last two time segments.  That is, we consider only the revenue earned up until time $T - 2T/f = 2fh - 4h = (2f-4)h$.
Consider the following path, call it $\opt’$ for convenience:
    $o, v_1,\ldots,v_m, u_2,\ldots, u_{f-2}$. 
Note we stop at node $u_{f-2}$ because the requests from $u_{f-2}$ to $u_{f-1}$ and $u_{f-1}$ to $u_f$ are not yet released before time $(2f-4)h$.
$\opt’$ takes $1+m\cdot(h+1)$ time to get to $u_2$ and takes time $(1+h)\cdot(f/2 - 2)$ to go from $u_2$ to $u_{f-2}$. The total is
  $1 + m(h+1) + (1+h)\cdot(f/2 - 2)$.
The largest $m$ for which $\opt'$ is completed before time $(2f-4)h$ is
  $$m = \lfloor (3hf - 4h - f + 2)/(2(h+1)) \rfloor.$$
  Observe that for this value of $m$, we have $1 + m(h+1) + (f/2 -2)\cdot(1+h) \le (2f-4)h$ as needed.
Clearly, $\opt’$ can serve all the requests on the path $v_1,...,v_m,u_2$ because these requests were all released at time $t=1$.
Now, for each $k=1,...,f/2-2$, $\opt’$ arrives at vertex $u_{2k}$ at time 
    $\tau_k = 1+m(h+1)+(1+h)(k-1)$.
By Lemma \ref{lem:lb5} in Appendix \ref{app:lb}, $\tau_k \ge 4kh+1$ for each $k=1,\ldots,f/2-2$. 
Therefore the requests $(u_{2k}, u_{2k+1}, 4kh+1, B)$ 
and $(u_{2k+1}, u_{2k+2}, 4kh, B+\epsilon)$ are released on or before $\tau_k$, allowing $\opt’$ to serve these two requests when it reaches $u_{2k}$ at time $\tau_k$.
Therefore all the drives starting at $v_1$ are revenue generating requests for $\opt’$.
 
Now, $\opt’$ has revenue $mB$ from $v_1$ up to $u_2$ and revenue $(2B + \epsilon)(f-4)/2$ from $u_2$ to $u_{f-2}$. 
Let $rev(\textsc{alg})$ denote the total revenue earned by a schedule, $\textsc{alg}$. Then we can write 
$rev(\opt’) = mB + (2B+\epsilon)(f - 4)/2
= mB + Bf -4B + \epsilon(f-4)/2
= \lfloor (3hf - 4h - f + 2)/(2(h+1)) \rfloor B + fB -4B + \epsilon(f-4)/2$.

The ratio is thus
$$ \frac{rev(\opt’)}{rev(\sbp)} = \frac{ \lfloor (3hf - 4h - f + 2)/(2(h+1)) \rfloor B + fB -4B + \epsilon(f-4)/2 } { B\cdot(f-2)/2 +f\cdot\epsilon/2 }.$$

Taking the limit as $\epsilon$ approaches $0$, 
 \begin{eqnarray}\nonumber
     \lim_{\epsilon\rightarrow 0} \frac{rev(\opt’)}{rev(\sbp)}
  &=& \frac{ \lfloor (3hf - 4h - f + 2)/(2(h+1)) \rfloor B + fB -4B}{ B\cdot(f-2)/2}\\\nonumber
  &=& \frac{\lfloor (3hf - 4h - f + 2)/(2(h+1)) \rfloor  + f -4 }{  (f-2)/2 }
 \end{eqnarray}
Next we take the limit as $T$ approaches infinity, which is the same as taking $h$ to infinity, because $f$ is fixed.
The inside of the floor can be rewritten as

$$(3hf - 4h - f + 2)/(2(h+1))
  =   3f/2 - 2     -    (2f-3)/(h+1).$$

Note $3f/2-2$ is an integer since $f$ is even 
So when  $h+1>2f-3$, we have $0 < (2f-3)/(h+1) < 1$.
Thus
   $\lfloor   3f/2 - 2     -    (2f-3)/(h+1) \rfloor =  3f/2 - 2 - 1 = 3f/2 - 3$
when $h > 2f -4$.
Then
$  \lim_{T\rightarrow \infty}  \lfloor   3f/2 - 2     -    (2f-3)/(h+1) \rfloor =  3f/2 - 2 - 1 = 3f/2 - 3.$
Therefore
  $$\lim_{T\rightarrow \infty} \lim_{\epsilon \rightarrow 0} \frac{rev(\opt’)}{rev(\sbp)} 
  = \frac{ 3f/2 - 3  + f - 4 }{ (f-2)/2 }
  = \frac{ 5f/2 - 7 }{ (f-2)/2 } 
  =\frac{(5f -14)}{(f-2)}
  =5 - 4/(f-2)$$

Summarizing, for a fixed $f>2$, this instance gives a lower bound of
$5 - 4/(f-2)$
as $\epsilon$ approaches 0 and $T$ approaches infinity.
\end{proof}
For definitions and context for the following lemma, please refer to the proof of Theorem \ref{thm:lb} above, from which the following lemma is referenced.  Recall that in the analysis of the lower bound instance, $\tau_k$ was used to denote the time at which $\opt'$ arrives at vertex $u_{2k}$. 

\begin{lemma}\label{lem:lb5}
$\tau_k \ge 4kh+1$ for $k=1\ldots f/2-2$.
\end{lemma}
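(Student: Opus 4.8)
The plan is to reduce the inequality to a single worst case in $k$ and then to a one-line computation. First I would rewrite $\tau_k$ in closed form: from $\tau_k = 1 + m(h+1) + (1+h)(k-1)$ we get $\tau_k = 1 + (h+1)(m+k-1)$, so the claim $\tau_k \ge 4kh+1$ is equivalent to $(h+1)(m+k-1) \ge 4kh$. (If $f=4$ the index set $\{1,\dots,f/2-2\}$ is empty and there is nothing to prove, so we may assume $f\ge 6$, in which case $f/2-2\ge 1$.)

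Next I would substitute the definition of $m$. Writing $M = (3hf-4h-f+2)/(2(h+1))$ so that $m=\lfloor M\rfloor$, we always have $m>M-1$ (whether or not $M$ is an integer), and $(h+1)M = (3hf-4h-f+2)/2$. Plugging $m>M-1$ into $(h+1)(m+k-1)-4kh$ and simplifying yields
\[
(h+1)(m+k-1)-4kh \;>\; \frac{3hf-4h-f+2}{2} + (1-3h)\,k - 2h - 2 .
\]
The right-hand side is affine in $k$ with slope $1-3h<0$, so it is decreasing in $k$ and is minimized over $k\in\{1,\dots,f/2-2\}$ at $k=f/2-2$.

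Finally I would evaluate the right-hand side at $k=f/2-2$; the $fh$-terms and the $f$-terms cancel and what remains is exactly $2h-3$, so $\tau_k-1-4kh>2h-3$ for every $k$ in range. Since all distances in the instance are integer-valued and several of them equal $h$, the parameter $h$ is a positive integer, and the standing assumption $h>1$ forces $h\ge 2$; hence $2h-3\ge 1>0$ and $\tau_k>4kh+1$, which is in fact slightly stronger than the stated bound. The only real subtlety — the main obstacle — is that replacing $m=\lfloor M\rfloor$ by $m>M-1$ discards almost a full unit, so one must verify that the leftover slack (which works out to $2h-3$) is still nonnegative; this is exactly the point where integrality of the distances, and therefore $h\ge 2$, is needed. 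Everything else is routine algebra.
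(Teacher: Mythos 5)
Your proof is correct and follows essentially the same route as the paper's: both lower-bound $m$ by $(3hf-4h-f+2)/(2(h+1))-1$ and reduce the claim to an algebraic inequality, with your version simply organizing the algebra as an affine function of $k$ minimized at $k=f/2-2$ rather than the paper's factored form $(3f/2-4-3k)(h-1/3)$. If anything, you are more careful than the paper at the one delicate point: both arguments ultimately need $h\ge 3/2$, which you justify explicitly via integrality of the distances ($h\ge 2$), whereas the paper uses it implicitly.
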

\begin{proof}Recall that for each $k=1 \ldots f/2-2$, $\opt’$ arrives at vertex $u_{2k}$ at time 
    $\tau_k = 1+m(h+1)+(1+h)(k-1)$.
    By definition of $m = \lfloor (3hf - 4h - f + 2)/(2(h+1)) \rfloor$,
we know $$m \ge (3hf - 4h - f + 2)/(2(h+1))  -1.$$
So
   $$m(h+1) \ge (3hf - 4h - f +2)/2 - (h+1).$$
Then
  $$\tau_k \ge 1 + (3hf - 4h - f +2)/2 - (h+1)+ (1+h)(k-1).$$
We then rewrite as
  \begin{align}
      \tau_k &\ge& 1 + (3f/2 - 2)h -f/2 + 1 - h - 1 + k-1+ h(k-1)  - 4kh + 4kh\\
  &\ge& 1 + 4kh + (3f/2-2 +k-1-1 -4k)h - f/2 -1 +k\\
  &\ge& 1+ 4kh + (3f/2 - 4 -3k)h - f/2- 1 + k\\
&\ge& 1 + 4kh + (3f/2 -4 -3k)h - f/2 + 4/3 + k   - 1 -4/3\\
  &\ge& 1 + 4kh + (3f/2 -4 -3k)(h - 1/3) - 1 - 4/3 
  \end{align}
 Since $k\le f/2-2$, then $3k\le 3f/2-6$.
Then $3f/2 - 6 -3k\ge 0$.
Hence $3f/2-4-3k \ge 2$.
So
 $$ (3f/2 -4 -3k)(h - 1/3) -1 - 4/3\\
   \ge 2(h - 1/3) - 2(h-1) - 4/3\\
   \ge 2h - 2/3  -2h + 2 - 4/3\\
   \ge 0$$
Thus we have shown
  $\tau_k \ge 1 + 4kh$.
\end{proof}

\subsection{ Proofs for the uniform revenue case}\label{app:uni-rev}
We begin by reitrating several definitions that will aid us in proving Lemma \ref{lemma-a-4} which is needed for Theorem \ref{4-comp-theorem} of Section \ref{sec:uniformrev}.

For all windows $i=1,2,..., m$, where $m=\lceil f/2\rceil -1$, we let $S'_i$ denote the set of requests served by $\sbp'$ in window $i$ and $S^*_i$ denote
the set of requests served by $\opt$ during the time segment of window $i$ with greater revenue, i.e. $S^*_i =\argmax \{ rev(\opt (t_{2i-1}), rev(\opt(t_{2i})) \}$ where $rev(\opt (t_j))$ denotes the revenue earned by $\opt$ in time segment $t_j$. 

We define a new set $J_i^*$ as the set of requests served by $\opt$ during the time segment of window $i$ with less revenue, i.e. $J_i^*=\argmin \{ rev(\opt (t_{2i-1}), rev(\opt(t_{2i})) \}$.

Let $S_i^*=A_i \cup X_i^* \cup Y_i^*$, and $S_i'=A_i \cup X_i \cup Y_i$, where:
\begin{enumerate}
    \item $A_i$ is the set of requests that appear in both $S_i^*$ and $S_i'$. 
    \item $X_i^*$ is the set of requests that appear in $S_w'$ for some $w=1,2,...,i-1$. Note there is only one possible $w$ for each individual request $r\in X_i^*$, because each request can only be served once. 
    \item $Y_i^*$ is the set of requests such that no request from $Y_i^*$ appears in $S_w'$ for any $w=1,2,...,i-1,i$. 
    \item $X_i$ is the set of requests that appear in $S_w^*$ for some $w=1,2,...,i-1$. Note there is only one possible $w$ for each individual request $r\in X_i$, because each request can only be served once.  
    \item $Y_i$ is the set of requests such that no request from $Y_i$ appears in $S_w^*$ for any $w=1,2,...,i-1,i$.
    \end{enumerate}
Note that elements in $Y_i$ can appear in a previous $J_w^*$ for any $w=1,2,...,i-1, i$ or in a future $S^*_v$ or $J^*_v$ for any $v=i+1,i+2,...,m$, or may not appear in any other sets.

Also note that since each request can be served at most once, we have:
\begin{enumerate}
    \item $A_1 \cap X_1^* \cap Y_1^* \cap A_2 \cap X_2^* \cap Y_2^*\cap...\cap A_{m}\cap X_{m}^*\cap Y_{m}^* = \emptyset$ 
    \item $A_1\cap X_1\cap Y_1\cap A_2\cap X_2\cap Y_2\cap ...\cap A_{m}\cap X_{m}\cap Y_{m} = \emptyset$
\end{enumerate}

Given the above definitions, we have the following lemmas: 

\begin{lemma}
    \label{lemma-a-1}
    All requests $r\in X_i^*$ must satisfy that $r \in Y_w$ for some $w=1,2,...,i-1$, and there is only one possible value of $w$.
\end{lemma}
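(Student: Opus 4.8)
\textbf{Proof plan for Lemma~\ref{lemma-a-1}.}
The plan is to fix an arbitrary request $r\in X_i^*$ and trace where it can live in the $\sbp'$ schedule. By the very definition of $X_i^*$, the request $r$ appears in $S_w'$ for some window $w$ with $1\le w\le i-1$, and this $w$ is unique because $\sbp'$ serves each request at most once (exactly the observation already recorded in the definition of $X_i^*$). So the only thing left to prove is that this particular $w$ satisfies $r\in Y_w$ rather than $r\in A_w$ or $r\in X_w$.

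First I would rule out $r\in A_w$. If $r\in A_w$, then by definition $r\in S_w^*$, i.e. $r$ is served by $\opt$ in the higher-revenue time segment of window $w$. But $r\in X_i^*\subseteq S_i^*$ means $r$ is also served by $\opt$ in a time segment of window $i$, and since $w\le i-1<i$ these are two distinct time segments. That contradicts the fact that $\opt$ serves each request at most once. Hence $r\notin A_w$.

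Next I would rule out $r\in X_w$. If $r\in X_w$, then by definition $r$ appears in $S_v^*$ for some $v$ with $1\le v\le w-1$. Again $r\in S_i^*$ with $v\le w-1<i$, so $\opt$ would be serving $r$ in two distinct time segments, the same contradiction as before. Hence $r\notin X_w$. Since $S_w'=A_w\cup X_w\cup Y_w$ and we have excluded the first two pieces, we conclude $r\in Y_w$, and the value of $w$ is the unique one described above, completing the proof.

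The argument is short and essentially bookkeeping; the only point that needs care is being explicit that all of the sets $S_1^*,S_2^*,\ldots$ are pairwise disjoint (because they are served by $\opt$ in pairwise distinct time segments and a request is served at most once), and similarly that $\sbp'$ serves each request at most once so that the window $w$ is unique. I do not anticipate a genuine obstacle, only the need to invoke the ``served at most once'' property twice — once for $\opt$ (to kill the $A_w$ and $X_w$ cases) and once for $\sbp'$ (for uniqueness of $w$).
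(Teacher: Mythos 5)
Your proposal is correct and follows essentially the same elimination argument as the paper: decompose $S_w' = A_w \cup X_w \cup Y_w$, rule out $A_w$ and $X_w$ because either would force $\opt$ to serve $r$ in two distinct time segments, and conclude $r \in Y_w$. The uniqueness of $w$ is, as you note, already part of the definition of $X_i^*$, and the paper handles it the same way.
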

\begin{proof}
    By definition, each request of $X_i^*$ must appear in $S_w'$ for some $w=1,2,...,i-1$, and there is only one possible value of $w$. Let $r$ be a request of $X_i^*$. 
    We know that $r$ must appear in either  $A_w$, or $X_w$, or $Y_w$.  However, $r$ cannot appear in $A_w$, since otherwise $r$ would have been served in $S_w^*$, where $w<i$, which is a contradiction since $r$ is served in $S_i^*$. Similarly, $r$ cannot appear in $X_w$, since otherwise $r$ would have been served in $S_v^*$ for some $v=1,2,...,w-1$, where $v<w<i$, which is a contradiction since we know $r$ is served in $S_i^*$. By elimination, $r$ must be a request of $Y_w$.
\end{proof}

\begin{lemma}
    \label{lemma-a-2}
    $X^*_1\cup X_2^* \cup ... \cup X_i^* \subseteq Y_1\cup Y_2\cup ...\cup Y_{i-1}$ for all $i=2,3,...,m$. 
\end{lemma}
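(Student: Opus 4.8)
The plan is to obtain this as an immediate consequence of Lemma \ref{lemma-a-1}, essentially by ``unrolling'' the union over windows. First I would fix an index $i$ with $2 \leqslant i \leqslant m$ and take an arbitrary request $r \in X_1^* \cup X_2^* \cup \dots \cup X_i^*$. By definition of the union there is some $j$ with $1 \leqslant j \leqslant i$ such that $r \in X_j^*$. (Note that $X_1^* = \emptyset$, since the index set $w = 1,\dots,0$ appearing in its definition is empty, so in fact $j \geqslant 2$; this is not needed for the argument but is reassuring.)

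Next I would apply Lemma \ref{lemma-a-1} to the request $r \in X_j^*$: it guarantees a window index $w$ with $1 \leqslant w \leqslant j-1$ such that $r \in Y_w$. Since $j \leqslant i$, this gives $w \leqslant j-1 \leqslant i-1$, hence $r \in Y_1 \cup Y_2 \cup \dots \cup Y_{i-1}$. As $r$ was an arbitrary element of $X_1^* \cup \dots \cup X_i^*$, this establishes the claimed inclusion for the fixed $i$, and since $i$ was arbitrary in $\{2,\dots,m\}$ the lemma follows.

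I do not anticipate any genuine obstacle here: all the real content lives in Lemma \ref{lemma-a-1}, and this lemma merely repackages it as a statement about the cumulative unions $\bigcup_{j\le i} X_j^*$ and $\bigcup_{w\le i-1} Y_w$. The only point requiring a moment's care is checking that the window index $w$ produced by Lemma \ref{lemma-a-1} is \emph{strictly} smaller than the window $j$ in which $r$ is counted by $X_j^*$, so that it lands among $Y_1,\dots,Y_{i-1}$ and not possibly in $Y_i$; but this is precisely the range $1\leqslant w\leqslant j-1$ that Lemma \ref{lemma-a-1} supplies.
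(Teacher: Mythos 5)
Your proof is correct and rests on exactly the same key fact as the paper's, namely Lemma \ref{lemma-a-1}; the only difference is that you chase an arbitrary element directly through the unions, whereas the paper packages the same observation as an induction on $i$. Your direct version is, if anything, slightly cleaner, and your side remarks (that $X_1^*=\emptyset$ and that the index $w$ supplied by Lemma \ref{lemma-a-1} is strictly below $j$) are both accurate.
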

\begin{proof}
    We prove this by induction. For the base case, by Lemma \ref{lemma-a-1},  $X_1^*$ must be a subset of $Y_0$, where $Y_0$ is the empty set; $X^*_2$ must be a subset of $Y_1$. Therefore, $X_1^*\cup X_2^*=\emptyset \cup X_2^* \subseteq Y_1$.
    For the inductive case, assume  \begin{eqnarray}
        \label{lemma-a-2-temp-1}
        X_1^*\cup X_2^* \cup ...\cup X_k^* \subseteq Y_1\cup Y_2\cup ...\cup Y_{k-1}.
    \end{eqnarray}
    Consider $X_{k+1}^*$ and $Y_k$. 
    By Lemma \ref{lemma-a-1}, elements of $X_{k+1}^*$ can come from only two sources: $Y_k$ and $Y_1\cup Y_2 \cup ...Y_{k-1}$.  
    Therefore, 
    \begin{eqnarray}
        \label{lemma-a-2-temp-2}
        X_{k+1}^* \subseteq Y_1\cup Y_2 \cup ...\cup Y_{k-1} \cup Y_k
    \end{eqnarray}
    Combining  (\ref{lemma-a-2-temp-1}) and (\ref{lemma-a-2-temp-2}), we have
    $$X_1^*\cup X_2^* \cup ...\cup X_k^* \cup X_{k+1}^* \subseteq Y_1\cup Y_2\cup ...\cup Y_{k-1} \cup Y_k.$$
\end{proof}

We now restate Lemma \ref{lemma-a-4}, originally presented in Section \ref{sec:uniformrev}, for the reader's convenience before giving its proof.
\begin{lemma}
    $|X_1^*|+|X_2^*|+...+|X_i^*| \leqslant |Y_1|+|Y_2|+...+|Y_{i-1}| + |Y_i|$ for all $i=1,2,...,m$
\end{lemma}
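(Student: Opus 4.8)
The plan is to deduce the stated cardinality inequality from the set-containment already established in Lemma~\ref{lemma-a-2}, together with two elementary observations: that the sets $X_1^*,\dots,X_i^*$ are pairwise disjoint, and that cardinality is subadditive over unions. First I would dispose of the base case $i=1$: by Lemma~\ref{lemma-a-1}, every request of $X_1^*$ lies in some $Y_w$ with $w\le 0$, so $X_1^*=\emptyset$ and $|X_1^*|=0\le |Y_1|$, as required.

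For $i\ge 2$, the key structural fact is that $X_1^*,\dots,X_i^*$ are pairwise disjoint. Indeed, $X_w^*\subseteq S_w^*$ for each $w$, and the sets $S_1^*,\dots,S_i^*$ are pairwise disjoint because each request is served by $\opt$ at most once and $S_w^*$ consists precisely of the requests $\opt$ serves in the heavier of the two time segments of window $w$; hence no request can belong to two different $S_w^*$'s. Consequently
\[
\sum_{w=1}^{i} |X_w^*| \;=\; \Bigl|\bigcup_{w=1}^{i} X_w^*\Bigr|.
\]

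Now I would apply Lemma~\ref{lemma-a-2}, which gives $\bigcup_{w=1}^{i} X_w^* \subseteq \bigcup_{w=1}^{i-1} Y_w$, so that $\bigl|\bigcup_{w=1}^{i} X_w^*\bigr| \le \bigl|\bigcup_{w=1}^{i-1} Y_w\bigr|$. Finally, by subadditivity of cardinality over a union, $\bigl|\bigcup_{w=1}^{i-1} Y_w\bigr| \le \sum_{w=1}^{i-1}|Y_w| \le \sum_{w=1}^{i}|Y_w|$. Chaining these inequalities yields $\sum_{w=1}^{i}|X_w^*| \le \sum_{w=1}^{i}|Y_w|$, which is exactly the claim (the term $|Y_i|$ on the right is simply a harmless slack, so in fact the slightly stronger bound with $\sum_{w=1}^{i-1}|Y_w|$ holds for $i\ge 2$).

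I do not anticipate a genuine obstacle here; the proof is a short bookkeeping argument. The only point requiring a moment of care is the justification that the $X_w^*$ are disjoint — and more generally that invoking Lemma~\ref{lemma-a-2} with its index range $i\ge 2$ is legitimate — which is why I would treat $i=1$ as a separate base case rather than trying to fold it into a single uniform argument.
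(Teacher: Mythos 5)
Your proposal is correct and follows essentially the same route as the paper: both derive the inequality from the containment in Lemma~\ref{lemma-a-2}, use the pairwise disjointness of the $X_w^*$ to replace the union's cardinality by the sum, and then bound the cardinality of the union of the $Y_w$'s by the sum of their cardinalities. Your treatment is in fact slightly more careful than the paper's, since you explicitly justify the disjointness via the $S_w^*$'s and handle $i=1$ separately (the paper invokes Lemma~\ref{lemma-a-2} for all $i\ge 1$ even though that lemma is stated only for $i\ge 2$).
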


\begin{proof}
    The size of a set $A$ must be at least the size of any subset of $A$ so from Lemma \ref{lemma-a-2}, we have for all $i=1,2,...,m$,
    $$|X_1^*\cup X_2^* \cup ...X_i^*| \leqslant |Y_1\cup Y_2\cup ...Y_{i-1}|$$
     
Recall we observed above that $X_1^* \cap X_2^* \cap ...\cap X_i^* = \emptyset$ and 
$Y_1 \cap Y_2 \cap... \cap Y_{i-1} = \emptyset$.  Hence, we can rewrite the inequality as
    $$|X_1^*|+|X_2^*|+...+|X_i^*| \leqslant |Y_1|+|Y_2|+...+|Y_{i-1}|$$

    By adding $|Y_i|$ on the right-hand-side, we have proven the claim.
\end{proof}

\subsection{ Proofs for bipartite graphs}\label{app:bipartite}

\begin{proof}[Proof of Theorem \ref{thm:roldarpb}]

 For every node $u$ in the RDARP graph, make two nodes $u_1$ and $u_2$ for the bipartite graph such that $u_1 \in V_1$ (the source side) and $u_2 \in V_2$ (the destination side), then set the weights $w(u_1,u_2) = w(u_2,u_1)$ in the bipartite graph to a diminishingly small $\epsilon > 0$.
    
    For every  edge $(u,v)$ in the RDARP graph with weight $w(u,v)$, make two edges in the bipartite graph: one from $u_1 \in V_1$ to $v_2 \in V_2$ and a second from $u_2 \in V_2$ to $v_1 \in V_1$, both with weight $w(u,v)$. 
     Finally, for each request from source $s$ to destination $d$ in the RDARP instance, create an equivalent-revenue request in the RDARP-B instance from $s_1 \in V_1$ to $d_2 \in V_2$.  Set the time limit in the RDARP-B instance to be $T+\delta$, where $T$ is the time limit of the RDARP instance and $\delta = T\epsilon$ in order to accommodate any time needed to travel the $\epsilon$-weight edges.  We also choose $\epsilon$ so that $\delta$ is smaller than the smallest-weight edge of the original graph.
    
    Observe that a schedule $\sigma$ with revenue $R$ exists in the RDARP instance if and only if a schedule $\sigma'$ with revenue $R$ exists in the RDARP-B instance: 
    
    (1) Let the path $P = (a,b,c,\ldots)$ be the path followed in the execution of the schedule $\sigma$ in the RDARP instance.  The equivalent path in the RDARP-B instance would be $P' = (a_1, b_1, b_1, c_2, c_1, \ldots)$.  Notice that by the construction above, $P'$ has the same revenue and execution time (to within an additive $\delta$) in the bipartite instance as $P$ does in the original instance.
    
    (2) Any path $P'$ followed in the execution of a schedule $\sigma'$ in the bipartite graph must alternate from one side of the bipartition to the other. 
    Empty moves in $P'$ from a vertex $x_1 \in V_1$ to another vertex $y_2 \in V_2$ in the bipartite graph represent empty moves from $x$ to $y$ in $P$.  Moves in $P'$ from any node $y_2 \in V_2$ to another node $z_1 \in V_1$ also represent empty moves from vertices $y$ to $z$ in $P$.
    Any moves in $P'$ along $\epsilon$-weight edges are not moves at all along the corresponding path $P$ in the original graph, as they take virtually no time and represent the server staying put in $P$.  Finally, by construction, any requests served in $P'$ will be along an edge from some vertex $x_1 \in V_1$ to another vertex $y_2 \in V_2$, which will correspond to a request with source $x$ and destination $y$ in the RDARP instance.  In sum, $P$ will take essentially the same amount of time (minus at most $\delta$ time units) and earn the same revenue as $P'$.  
\end{proof}

    
    
    
    



\begin{proof}[Proof of Theorem \ref{thm:r-uniform-comp}]

 We define a \textit{k-unit} 
 as a length of time of duration $kT/f$. As before, we refer to each pair of consecutive time segments as a time window. We say a $k$-unit belongs to time window $i$ if it ends within time window $i$. Note that a $k$-unit may straddle two windows by starting at one window and ending in the next.

    If $2/k$ is an integer, then there are strictly $\frac{2T/f}{kT/f}=2/k$ $k$-units in one time window. Note that this is true even if the window contains a straddling $k$-unit, since this $k$-unit will force another one to straddle into the next time window. 
    
   If $2/k$ is not an integer, then there are $\lfloor \frac{2T/f}{kT/f} \rfloor=\lfloor 2/k \rfloor$ non-straddling $k$-units within one window.  If the window contains a straddling $k$-unit, the number of $k$-units will be $\lfloor 2/k \rfloor +1=\lceil 2/k \rceil$.  Among those $\lceil 2/k \rceil$ $k$-units, at most $\lceil \lceil 2/k \rceil /2 \rceil = \lceil 1/k \rceil$ 
   of them can be used in an $\opt$ schedule to serve requests since no algorithm can serve two or more requests consecutively without a move in between. Therefore, whether or not $2/k$ is an integer, the maximum number of requests that can be served in each window is $\lceil 1/k\rceil$. 
    
    Let $\mu = \lceil f/2 \rceil$ denote the total number of windows. There are two cases based on the performance of $\sbp$: 
    \begin{enumerate}
    \item $\sbp$ serves at least 1 request per window. In this case, $\sbp$ serves at least $\mu$ requests, and $\opt$ serves at most $\mu \cdot \lceil 1/k \rceil$. Therefore, $\frac{rev(\opt)}{rev(\sbp)} \le \frac{\mu \lceil 1/k \rceil}{\mu} \le \lceil 1/k \rceil$. 
    
    \item There exists at least one window where $\sbp$ serves no requests. We refer to such a window as an ``empty window.'' Consider the last empty window that occurred within the entire time limit, and denote this window as $w$. Let $\tau$ denote the start time of window $w$.  We analyze the requests served (1) before, (2) during, and (3) after $w$:
  
    \begin{itemize}
    \item Before window $w$: Since $\sbp$ serves nothing during window $w$, we know that all requests released before time $\tau$ have been served by $\sbp$.  Let $b$ denote this number of requests. So before $\tau$, $\opt$ could have served at most $b$ requests. 
    
    \item During window $w$: $\opt$ can serve at most $\lceil 1/k \rceil$ requests and $\sbp$ serves no requests.
    
    \item After window $w$: Suppose there are $x$ windows after $w$ (excluding window $w$). By definition, window $w$ is the last window where $\sbp$ serves no requests, hence during the $x$ windows afterwards, $\sbp$ serves at least one request per window. On the other hand, $\opt$ serves at most $\lceil 1/k \rceil$ requests per window. 
    \end{itemize}
    We know that $\lceil 1/k \rceil \geqslant 1$.
    Therefore we have $rev(\opt) \le b+\lceil 1/k \rceil +x\cdot \lceil 1/k \rceil$ and $rev(\sbp)\ge b+0+x.$
    Hence
    $$rev(\opt)\leqslant \lceil 1/k \rceil \cdot rev(\sbp)+\lceil 1/k \rceil$$
      \end{enumerate}
\end{proof}


The proof of Theorem \ref{thm:r-nonuniform-comp} (in Section \ref{sec:bipartite-nonuniform})
relies on the fact that $rev(\sbp)\geqslant rev(\sbp' )$ which we now prove using the following two lemmas. 
Recall that within the $i^{th}$ window $\sbp'$ serves exactly one request: the maximum revenue request served by $\opt$ during the $(i-1)^{th}$ window. We now create another subroutine, called $\sbpdb$, that, for every every window, serves the highest revenue available request. 
Lemma \ref{land-of-oz} shows why $rev(\sbpdb)\geqslant rev(\sbp')$, and Lemma \ref{lem:conclude-sbpdb} shows why $rev(\sbp)\geqslant rev(\sbpdb)$, so we know $rev(\sbp)\geqslant rev(\sbp')$.

\begin{lemma}
    \label{land-of-oz} 
    
    Let $O'$ and $O''$ denote the set of revenues of the $\mu = \lceil f/2 \rceil$ requests served by $\sbp'$ and $\sbpdb$ by the end of window $\mu$, sorted in descending order. 
    Let $O'[z]$ and $O''[z]$ denote the $z^{th}$ element of $O'$ and $O''$, respectively, where $z=1,2,...,\mu$.
   
    Then, $O''[z]\geqslant O'[z]$ for all $z=1,2,...,\mu$.  
\end{lemma}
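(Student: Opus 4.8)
I would not try to compare $O'$ and $O''$ window-by-window; $\sbpdb$ is greedy and may serve a high-revenue request \emph{earlier} than $\sbp'$ does, so a per-window revenue comparison breaks down. Instead I would prove the equivalent threshold statement: \emph{for every $t$, the number of requests of revenue at least $t$ served by $\sbpdb$ over its $\mu$ windows is at least the number served by $\sbp'$.} This is equivalent to $O''[z]\ge O'[z]$ for all $z$ because $O'$ and $O''$ are descending sequences of the same length $\mu$ (pad with zero-revenue ``dummy'' entries in any window where a routine serves nothing): for a descending sequence the number of entries $\ge t$ equals the largest index with value $\ge t$, so pointwise domination and domination of all the counts $|\{z:\,\cdot\,[z]\ge t\}|$ are the same statement. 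The case $t\le 0$ is trivial, so fix $t>0$.

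\textbf{Key feasibility fact.} A window has length $2T/f$, and every edge (hence every single request, and every empty move to a source) has weight at most $T/f$; so within any one window $\sbpdb$ can reach the source of, and then serve, any single request that has already been released. Hence \textbf{if at the start of some window there is a request $r$ that has been released but not yet served by $\sbpdb$, then $\sbpdb$ serves a request of revenue at least $rev(r)$ in that window}, by its greedy choice of the maximum-revenue available request. (Two small things to pin down in the write-up: that $\sbpdb$ acts at the \emph{start} of each window so requests released strictly later are irrelevant, and the reduction to threshold counts above.)

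\textbf{Core induction.} Let $w_1<\dots<w_k$ be the windows in which $\sbp'$ serves a request of revenue at least $t$, so $k$ is exactly the count for $\sbp'$. By definition of $\sbp'$, the request it serves in window $w_l$ is $\hat r_l$, the maximum-revenue request $\opt$ served in window $w_l-1$; since $\opt$ completed $\hat r_l$ within window $w_l-1$, $\hat r_l$ is released by the start of window $w_l$, the $\hat r_l$ are pairwise distinct (served by $\opt$ in distinct windows), and each has revenue at least $t$. Let $n_l$ be the number of requests of revenue at least $t$ that $\sbpdb$ serves in windows $1,\dots,w_l$; I claim $n_l\ge l$ for all $l$, which at $l=k$ gives what we need. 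I prove this by induction (with $n_0:=0$ and $w_0:=0$): in windows $1,\dots,w_l-1\supseteq\{1,\dots,w_{l-1}\}$, $\sbpdb$ has already served at least $n_{l-1}\ge l-1$ such requests; if it has served at least $l$ of them we are done, and otherwise it has served exactly $l-1$, so since the $l$ distinct revenue-$\ge t$ requests $\hat r_1,\dots,\hat r_l$ are all released by the start of window $w_l$, at least one of them is still unserved, and by the feasibility fact $\sbpdb$ serves a further revenue-$\ge t$ request in window $w_l$, giving $n_l\ge l$.

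\textbf{Main obstacle.} The only real difficulty is the bookkeeping around $\sbpdb$ ``getting ahead'' of $\sbp'$; the threshold/pigeonhole formulation is exactly what absorbs it, since serving a high-revenue request early only helps the cumulative count and the induction is phrased in terms of counts over a prefix of windows rather than per-window revenues. Everything else (feasibility within a window, distinctness and release times of the $\hat r_l$, the count-vs-pointwise equivalence) is routine.
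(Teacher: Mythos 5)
Your proof is correct, but it takes a genuinely different route from the paper's. The paper proves $O''[z]\ge O'[z]$ by strong induction on the rank $z$ directly: assuming domination up to rank $l$, it supposes $O''[l+1]<O'[l+1]$ and derives a contradiction from the greediness of $\sbpdb$, with a delicate case analysis (its Cases 2a and 2b) devoted to ties $O''[l]=O'[l]=O'[l+1]$, where one must track multiplicities of equal revenues in the two lists to verify that the request realizing $O'[l+1]$ has not already been ``used up'' among $O''[1],\dots,O''[l]$. Your reformulation as threshold counts --- for every $t$, $\sbpdb$ serves at least as many revenue-$\ge t$ requests as $\sbpprime$ --- makes that tie bookkeeping disappear, since multiplicities are absorbed by the counting, and replaces the contradiction argument with a pigeonhole induction over the windows $w_1<\dots<w_k$ in which $\sbpprime$ serves a revenue-$\ge t$ request. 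The two ingredients you isolate (the $\hat r_l$ are pairwise distinct, released by the start of window $w_l$, and of revenue at least $t$; any single released request is servable within one window of length $2T/f$ because every edge has weight at most $T/f$, so the greedy $\sbpdb$ earns at least $t$ in any window at which such a request is outstanding) are exactly the facts the paper's argument ultimately rests on, but your packaging is shorter and easier to verify. The equivalence between pointwise domination of two descending sequences and domination of all their threshold counts is standard and your sketch of it is fine; just be sure to pad both lists to the common length $\mu$ (as you note, $\sbpprime$ serves nothing in window $1$) and to state explicitly that $\sbpdb$ makes its greedy choice at the start of each window, so that ``released by the start of window $w_l$'' is the correct availability condition.
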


\begin{proof}
    We proceed by strong induction.
        Recall $Q_i'$ is the set of revenues of all requests served by $\sbp'$  during window $i$. Let $Q_i''$ and $Q_i^*$ denote the set of revenues of all requests served by $\sbpdb$ and $\opt$, respectively, during window $i$.
         
    \vspace{2mm}
 
    \textbf{Base case.} $z=1$. 
    We know that $O'[1]$ = $\max \{ Q_1^*\cup Q_2^* \cup ...\cup Q_{\mu -1}^* \}$.
    
    Consider $\sbpdb$ during the last window; there are two cases: 1. $\sbpdb$ has served a request with revenue equal to or larger than $O'[1] = \max \{ Q_1^*\cup Q_2^* \cup ...\cup Q_{\mu -1}^* \}$; 2. $\sbpdb$ has not served such a request.
    
    In the first case, no matter which request $\sbpdb$ serves in the last window, $O''[1] \geqslant O'[1]$.
    
    In the second case, $\sbpdb$ will choose one available request with the maximum revenue to serve in the last window, so $Q_\mu''$ will have revenue either equal to $\max \{ Q_1^*\cup Q_2^* \cup ...\cup Q_{\mu -1}^* \}$ or larger than $\max \{ Q_1^*\cup Q_2^* \cup ...\cup Q_{\mu -1}^* \}$. 
    Thus, when $\sbpdb$ is done, $O''[1] \geqslant O'[1]$.

    
    
    
    \vspace{2mm}
 
    \textbf{Inductive case.} Suppose $O''[z]\geqslant O'[z]$ is true for $z=1,2,...,l$. Consider $z=l+1$. We will show by contradiction that $O''[l+1]\geqslant O'[l+1]$. 
Suppose $O''[l+1]<O'[l+1]$.

    By the definition of $O''$, we know $O''[1], O''[2], ..., O''[l]$ are the $l$ largest revenues served by $\sbpdb$, and 
    \begin{eqnarray}
        \label{oz-eq-1}
        O''[1]\geqslant O''[2]\geqslant ...\geqslant O''[l]
    \end{eqnarray}
    
    From the inductive hypothesis, 
    \begin{eqnarray}
        \label{oz-eq-2}
        O''[l]\geqslant O'[l]
    \end{eqnarray}
    
    Given the ordered nature of $O'$, we have
    \begin{eqnarray}
        \label{oz-eq-3}
        O'[l]\geqslant O'[l+1]
    \end{eqnarray}
    
    Given $O''[l+1]<O'[l+1]$, (\ref{oz-eq-2}), and (\ref{oz-eq-3}), we have
    \begin{eqnarray}
        \label{oz-eq-4}
        O''[l]\geqslant O'[l] \geqslant O'[l+1] > O''[l+1]
    \end{eqnarray}
    
    The general approach of this proof is that, if the request that corresponds to the revenue $O'[l+1]$ is not the request that corresponds to any of $O''[1], O''[2], ..., O''[l]$, then the $(l+1)^{th}$ largest request selected by $\sbpdb$ would have been $O'[l+1]$ instead of $O''[l+1]$, since $O''[l+1]<O'[l+1]$. This is a contradiction. Now we must affirm that the request corresponding to $O'[l+1]$ does not correspond to any of $O''[1], O''[2], ..., O''[l]$. To verify this precondition, we consider the two possible cases of (\ref{oz-eq-4}). 

        
    \vspace{2mm}
 
    \textbf{Case 1.}    
    If $O''[l] > O’[l]$ or $O’[l] > O’[l+1]$, then given (\ref{oz-eq-2}) and (\ref{oz-eq-3}), the consequence of this case will be $O[l]>O’[l+1]$. More generally, taking into account (\ref{oz-eq-1}) and (\ref{oz-eq-4}), we have
    \begin{eqnarray}
        \label{new_eq.5}
        O''[1]\geqslant O''[2]\geqslant ...\geqslant O''[l]>O’[l+1]>O''[l+1],
    \end{eqnarray}
    which tells us that revenue $O’[l+1]$ does not correspond to any of $O''[1], O''[2], ..., O''[l]$. Therefore $O’[l+1]$ should have been the $(l+1)^{th}$ largest value of the set $O''$ instead of $O''[l+1]$, which is a contradiction. 
        
    \vspace{2mm}
 
    \textbf{Case 2a.}
    If $O''[l]=O’[l]=O’[l+1] > O''[l+1]$ and the request that corresponds to $O’[l+1]$ never corresponds to any revenue among $O''[1], O''[2], ..., O''[l]$, then $O’[l+1]$ should also have been the $(l+1)^{th}$ largest value of the set $O''$ instead of $O''[l+1]$, which is a contradiction. 
        
    \vspace{2mm}
 
    \textbf{Case 2b.}
    If $O''[l]=O'[l]=O'[l+1] > O''[l+1]$, and the request that corresponds to $O’[l+1]$ also corresponds to $O''[g]$ for some $1\leqslant g \leqslant l$, which indicates that 
     \begin{eqnarray}
        \label{new_eq.6}
        O''[g]=O''[g+1]=…...=O''[l] > O''[l+1]
    \end{eqnarray}
    From the ordered nature of $O’$ and the inductive hypothesis, we know
    \begin{eqnarray}
        \label{new_eq.7}
        O''[g]\geqslant O’[g]\geqslant O'[g+1]\geqslant... \geqslant O’[l+1]
    \end{eqnarray}
    Combined with the fact that $O''[g]=O'[l+1]$, it must be the case that
    \begin{eqnarray}
    \label{new_eq.8}
    O''[g] = O'[g] = O'[g+1] =... = O'[l+1]
    \end{eqnarray}
    Given that there could exist elements in $O''$ prior to $O''[g]$ that are equal to $O’[l+1]$, we let $O''[x]$ be the first element in $O''$ that is equal to $O’[l+1]$, where $1\leqslant x\leqslant g$. 
    Similar to (\ref{new_eq.7}), we have 
    \begin{eqnarray}
    \label{new_eq.9}
    O''[x]\geqslant O'[x]\geqslant O'[x+1]\geqslant... \geqslant O’[l+1]
    \end{eqnarray}
    Combining (\ref{new_eq.9}) and $O[x]=O'[l+1]$, we know that 
    \begin{eqnarray}
    \label{new_eq.10}
    O''[x] = O'[x] = O'[x+1] =... = O'[l+1]
    \end{eqnarray}
    
    Now we know that in $O''$, there are $(l-x+1)$ elements that have revenue $O'[l+1]$, while in $O'$, there are $((l+1)-x+1)=(l-x+2)$ elements with revenue $O’[l+1]$. The set $O'$ has one more element with revenue $O'[l+1]$ than the set $O''$. 

    This is a contradiction because then $\sbpdb$ would have chosen the extra request in $O’$ that has revenue $O’[l+1]$ to serve instead of directly serving $O''[l+1]$.  
\end{proof}

    The direct consequence of Lemma \ref{land-of-oz} is that 
    $$rev(\sbpdb)= \sum_{z=1}^{n}O''[z] \geqslant \sum_{z=1}^{n}O'[z]=rev(\sbp' ).$$
    
\begin{lemma}
    \label{lem:conclude-sbpdb}
    Let $\sbpdb[i]$ and $\sbp[i]$ denote the $i^{th}$ window served by $\sbpdb$ and $\sbp$ respectively. Define $U=\{1,2,\dots,\mu\}$ as the set of all possible indices of windows. Define $A_j$ and $B_j$ as any two subsets of $U$ that satisfy the following criteria: 
    \begin{enumerate}
        \item Both $A_j$ and $B_j$ are of size $j$ and are in increasing order. 
        \item $A_j[m] \leqslant B_j[m]$ for all $m=1,2,\dots ,j$, where $A_j[m]$ is the $m^{th}$ element of $A_j$ and $B_j[m]$ is the $m^{th}$ element of $B_j$.
        \item For requests served in $ \{\sbpdb[A_j[1]], \sbpdb[A_j[2]] \dots \sbpdb[A_j[j]] \}$, if they are ever served by $\sbp$, then they are served only in $\{ \sbp[B_j[1]], \sbp[B_j[2]] \dots \sbp[B_j[j]] \}$.
    \end{enumerate}
    
    Then, for all such possible $A_j$ and $B_j$, we have
    $$\sum_{m=1}^{j}rev(\sbpdb[A_j[m]])\leqslant \sum_{m=1}^{j}rev(\sbp[B_j[m]])$$
\end{lemma}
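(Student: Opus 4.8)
The plan is to prove the lemma by strong induction on $j$. For the \textbf{base case} $j=1$, write $A_1=\{a\}$, $B_1=\{b\}$ with $a\leqslant b$ and let $r$ be the single request $\sbpdb$ serves in window $a$. Since $\sbpdb$ serves $r$ in window $a$, $r$ is released by the start of window $a$, hence by the start of window $b$; and by condition~3, if $\sbp$ ever serves $r$ it does so in window $b$, so $\sbp$ has not served $r$ before window $b$. Thus $r$ is available to $\sbp$ when it makes its choice at the start of window $b$, and since every edge weight is at most $T/f$ the singleton request-set $\{r\}$ can be completed by $\sbp$ within window $b$ (one time segment to reach the source, one to traverse the request), so by the greediness of $\sbp$ we get $rev(\sbp[b])\geqslant rev(\{r\})=rev(\sbpdb[a])$. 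Recorded for later use: the identical argument gives $rev(\sbp[B_j[1]])\geqslant rev(\sbpdb[A_j[1]])$ in general (condition~3 guarantees $\sbp$ cannot have served the window-$A_j[1]$ request before window $B_j[1]$), and trivially $rev(\sbp[b])\geqslant rev(r)$ whenever $\sbp$ serves $r$ in window $b$.

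For the \textbf{inductive step}, fix $A_j=\{a_1<\dots<a_j\}$ and $B_j=\{b_1<\dots<b_j\}$ meeting conditions~1--3, write $r_m$ for the request $\sbpdb$ serves in window $a_m$ (the $r_m$ are pairwise distinct), and for $\ell=0,\dots,j$ let $f(\ell)$ be the number of indices $m$ for which $\sbp$ serves $r_m$ in one of $b_1,\dots,b_\ell$. The idea is to peel off a whole \emph{prefix} $\{b_1,\dots,b_\ell\}$ of $\sbp$-windows together with $\ell$ matching $\sbpdb$-windows and then apply the inductive hypothesis to what remains. If $f(1)=0$ we take $\ell=1$ and remove $b_1$ and $a_1$. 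Otherwise $f(1)\geqslant 1$: since $f$ is nondecreasing and $f(j)\leqslant j$, a short discrete-intermediate-value argument on $g(\ell)=f(\ell)-\ell$ (nonnegative at $\ell=1$, nonpositive at $\ell=j$, and decreasing by at most $1$ per step) produces a smallest $\ell\in\{1,\dots,j\}$ with $f(\ell)=\ell$; we fix that $\ell$, let $R=\{m:\sbp\text{ serves }r_m\text{ in some }b_i\text{ with }i\leqslant\ell\}$ so that $|R|=\ell$, and remove $\{b_1,\dots,b_\ell\}$ and $\{a_m:m\in R\}$.

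In either case the residual pair $(A',B')$ has the same size $j-\ell<j$, so condition~1 holds. For condition~2, note $B'[i]=b_{\ell+i}$ while $A'[i]=a_{n_i}$ where $n_i$ is the $i$-th smallest index not removed; since at most $\ell$ of the removed indices are smaller than $n_i$, we have $n_i\leqslant\ell+i$ and hence $A'[i]=a_{n_i}\leqslant b_{n_i}\leqslant b_{\ell+i}=B'[i]$. For condition~3, by the definition of $R$ (resp.\ of the $f(1)=0$ case) \emph{no} surviving $\sbpdb$-request is served by $\sbp$ in any of $b_1,\dots,b_\ell$, so combined with the original condition~3 every surviving $\sbpdb$-request that $\sbp$ serves is served inside $B'$. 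The inductive hypothesis then gives $\sum_{m\notin R}rev(r_m)\leqslant\sum_{i=\ell+1}^{j}rev(\sbp[b_i])$. Meanwhile the removed $\sbpdb$-requests $\{r_m:m\in R\}$ are distinct and each lies in the request-set $\sbp$ serves in some $b_i$, $i\leqslant\ell$; these request-sets are pairwise disjoint because $\sbp$ serves each request once, so $\sum_{m\in R}rev(r_m)\leqslant\sum_{i=1}^{\ell}rev(\sbp[b_i])$ (in the $f(1)=0$ case this is simply $rev(r_1)\leqslant rev(\sbp[b_1])$ from the recorded observation). Adding the two inequalities yields $\sum_{m=1}^{j}rev(\sbpdb[A_j[m]])=\sum_{m=1}^{j}rev(r_m)\leqslant\sum_{i=1}^{j}rev(\sbp[b_i])=\sum_{m=1}^{j}rev(\sbp[B_j[m]])$, which is the claim.

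The \textbf{main obstacle} is that $\sbp$, unlike $\sbpdb$ (which serves exactly one request per window), may serve several of the requests $r_1,\dots,r_j$ inside one window, so peeling a single matched pair $(a_m,b_m)$ at a time can violate condition~3 for the residual instance; switching to peeling a prefix-block of $B_j$ is what repairs this. The real work is therefore combinatorial: proving that either $b_1$ carries no $\sbpdb$-request or a prefix length $\ell$ with $f(\ell)=\ell$ exists, and then checking that deleting that block of $\sbp$-windows together with the $\ell$ associated $\sbpdb$-windows leaves an instance still satisfying the monotonicity condition~2 and the ``$\sbp$ serves only inside $B$'' condition~3. Once those are verified the revenue accounting is immediate, since $\sbp$ serves each request at most once and hence the request-sets it serves in distinct windows are disjoint.
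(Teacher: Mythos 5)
Your proof is correct, but it organizes the induction differently from the paper's. The paper also inducts on $j$, but it peels from the \emph{end}: it compares $rev(\sbpdb[A_{k+1}[k+1]])$ with $rev(\sbp[B_{k+1}[k+1]])$, and in the hard case locates the window $B_{k+1}[h]$ in which $\sbp$ actually serves that last request $r$, deletes $A_{k+1}[k+1]$ and $B_{k+1}[h]$, and then needs a further sub-case (with extra deletions and a revenue credit $N$) to restore condition~3 when $\sbp[B_{k+1}[h]]$ happens to contain other requests of $\sbpdb[A_k[1{:}k]]$. You instead peel a \emph{prefix} $\{b_1,\dots,b_\ell\}$ of $B_j$, with $\ell$ chosen by a discrete intermediate-value argument so that exactly $\ell$ of the requests $r_1,\dots,r_j$ are served by $\sbp$ inside that prefix; this makes condition~3 for the residual instance hold automatically and collapses the paper's two sub-cases (including its somewhat delicate choice of which $n$ extra windows to drop from $C_k$) into a single accounting step, at the price of having to verify that such an $\ell$ exists and that $n_i\le \ell+i$ so condition~2 survives --- both of which you do correctly. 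Both arguments ultimately rest on the same two primitives: greediness of $\sbp$ (an available request $r$ forces $rev(\sbp[b])\ge rev(r)$, since a single request always fits in one window) and disjointness of the request-sets $\sbp$ serves in distinct windows. One minor point to add in a final write-up: if $\sbpdb$ serves nothing in some window $a_m$, treat $r_m$ as a dummy request of revenue $0$ so the peeling bookkeeping still goes through.
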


\begin{proof}
    \textbf{Base case.} When $j = 1$, we assume $A_1[1] \leqslant B_1[1]$ and that if the request served by $\sbpdb[A_1[1]]$ is served by $\sbp$, it is served only in $\sbp[B_1[1]]$. This implies that the request served in $\sbpdb[A_1[1]]$ is available to $\sbp[B_1[1]]$, so given the greedy nature of $\sbp$, $rev(\sbp[B_1[1]]) \geqslant rev(\sbpdb[A_1[1]])$. Consequently, we can say $$\sum_{m=1}^{1}rev(\sbpdb[A_1[m]])\leqslant \sum_{m=1}^{1}rev(\sbp[B_1[m]])$$
    
    \textbf{Inductive Case.} We assume that $\sum_{m=1}^{j}rev(\sbpdb[A_j[m]])\leqslant \sum_{m=1}^{j}rev(\sbp[B_j[m]])$ for all $j$ where $1 \leqslant j \leqslant k$. Then we want to prove for all possible $A_{k+1}$ and $B_{k+1}$, 
    \begin{eqnarray}
        \label{evil-eq-1}
        \sum_{m=1}^{k+1}rev(\sbpdb[A_{k+1}[m]])\leqslant \sum_{m=1}^{k+1}rev(\sbp[B_{k+1}[m]])
    \end{eqnarray}
    
    There are two cases: 
    \begin{enumerate}
        \item $rev(\sbpdb[A_{k+1}[k+1]]) \leqslant rev(\sbp[B_{k+1}[k+1]])$
        \item $rev(\sbpdb[A_{k+1}[k+1]]) > rev(\sbp[B_{k+1}[k+1]])$
    \end{enumerate}
    
    If the first case is true, then combining it with the inductive hypothesis, (\ref{evil-eq-1}) is clearly true. 
    
    If the second case is true, we denote $r$ as the request served in window $\sbpdb[A_{k+1}[k+1]]$. Then the equation of case 2 can be rewritten as 
    \begin{eqnarray}
        \label{evil-case-2}
        rev(\sbp[B_{k+1}[k+1]])<rev(r)
    \end{eqnarray}
    
    Eqn. (\ref{evil-case-2}) implies that $r$ is not available to $\sbp[B_{k+1}[k+1]]$, because otherwise in window $B_{k+1}[k+1]$ $\sbp$ would have served some request(s) whose total revenue is at least the revenue of $r$. Suppose it is at window $\sbp[B_{k+1}[h]]$ that $\sbp$ serves $r$ where $ 1 \leqslant h < k+1$. 
    
    Define new subsets $A_k$ and $C_k$ where $A_k=A_{k+1}\backslash \{ A_{k+1}[k+1] \}$ and $C_k=B_{k+1} \backslash \{ B_{k+1}[h] \}$. It is evident that both $A_k$ and $C_k$ are of size $k$. Define $A_k[m]$ as the $m^{th}$ element of the newly defined $A_k$, and $C_k[m]$ as the $m^{th}$ element of the newly defined $C_k$. 
    
    Here we declare a shortcut of notation. For any set $V$ and integers $s\leqslant t$, $V[s:t]=\{ V[s],V[s+1],\dots,V[t-1],V[t] \}$. 
    
    One observation is that for each element $C_{k}[m]$ of $C_{k}$, we have $C_{k}[m] \geqslant A_{k}[m]$. This is because each element of $C_{k}$ will be minimized when $C_{k}[1]=B_{k+1}[1]$, $C_{k}[2]=B_{k+1}[2]$, $\dots$, $C_{k}[k]=B_{k+1}[k]$ respectively. Since we know $B_{k+1}[m] \geqslant A_{k+1}[m]$, we have $C_{k}[m] \geqslant A_{k}[m]$. 
    There are two sub-cases:
    
    \begin{enumerate}

    \item If $\sbp[B_{k+1}[h]]$ does not contain any of the requests of $\sbpdb[A_k[1:k]]$: Then $A_k$ and $C_k$ satisfy the three criteria listed in the lemma, so according to the inductive hypothesis, 
    \begin{eqnarray}
        \label{evil-eq-4}
        \sum_{m=1}^{k}rev(\sbpdb[A_{k}[m]])\leqslant \sum_{m=1}^{k}rev(\sbp[C_{k}[m]])
    \end{eqnarray}
    
    Adding $rev(r)$ on both sides of (\ref{evil-eq-4}), we have
    \begin{eqnarray}
        \label{evil-eq-5}
        \sum_{m=1}^{k+1}rev(\sbpdb[A_{k+1}[m]])\leqslant \sum_{m=1}^{k}rev(\sbp[C_{k}[m]])+rev(r) \leqslant \sum_{m=1}^{k+1}rev(\sbp[B_{k+1}[m]])
    \end{eqnarray}

    Note the second $\leqslant$ sign is valid since all the requests corresponding to $\sum_{m=1}^{k}rev(\sbp[C_{k}[m]])+rev(r)$ are served in $\sbp$ (and $\sbp$ may serve additional requests as well). 
    
    
    \item If $\sbp[B_{k+1}[h]]$ contains any of the requests of $\sbpdb[A_k[1:k]]$: Then $A_k$ and $C_k$ violate criterion (3) (since $B_{k+1}[h]$ is not in $C_k$) so the inductive hypothesis cannot be applied directly. Suppose there are $n$ such requests where $n\leqslant k$, and denote the total revenue of those $n$ requests as $N$. 
    Then we define $A_{k-n}=A_{k} \backslash \{ \text{indices of windows where the $n$ requests reside in $\sbpdb$} \}$. We also shrink $C_{k}$ to $C_{k-n}$ by removing $n$ windows that do not contain any of the requests of $A_{k-n}$. Then we can follow the same reasoning above to deduce $C_{k-n}[m] \geqslant A_{k-n}[m]$ for all $1\leqslant m\leqslant k-n$.
    
    So according to the inductive hypothesis,
    \begin{eqnarray}
        \label{evil-eq-6}
        \sum_{m=1}^{k-n}rev(\sbpdb[A_{k-n}[m]])\leqslant \sum_{m=1}^{k-n}rev(\sbp[C_{k-n}[m]])
    \end{eqnarray}
    
    Adding $N$ and $rev(r)$ on both sides, we have
    \begin{eqnarray}
        \label{evil-eq-7}
        \sum_{m=1}^{k+1}rev(\sbpdb[A_{k+1}[m]])\leqslant \sum_{m=1}^{k-n}rev(\sbp[C_{k-n}[m]])+ N + rev(r) \leqslant \sum_{m=1}^{k+1}rev(\sbp[B_{k+1}[m]])
    \end{eqnarray}
    
    Note the second $\leqslant$ sign of (\ref{evil-eq-7}) is valid since all the requests corresponding to $\sum_{m=1}^{k-n}rev(\sbp[C_{k-n}[m]])+ N + rev(r)$ are served in $\sbp$ (and $\sbp$ may serve additional requests as well). 

\end{enumerate}

    Finally, to prove $rev(\sbp)\geqslant rev(\sbpdb)$, we let $j=\mu$, $A_{\mu} = \{ 1,2,\dots ,\mu \}$, $B_{\mu} = \{ 1,2,\dots ,\mu \}$. This satisfies (1) $A_{\mu}$ and $B_{\mu}$ are in increasing order, (2) $A_{\mu}[m] \leqslant B_{\mu}[m]$ for all $m=1,2,\dots,\mu$, and (3) for requests served in $ \{\sbpdb[A_{\mu}[1]], \sbpdb[A_{\mu}[2]] \dots \sbpdb[A_{\mu}[\mu ]] \}$, if they are ever served by $\sbp$, are served only in windows $\{ \sbp[B_{\mu}[1]], \sbp[B_{\mu}[2]] \dots \sbp[B_{\mu}[\mu ]] \}$. 
    Therefore, $rev(\sbp)\geqslant rev(\sbpdb)$ is simply a specific case of $\sum_{m=1}^{j}rev(\sbpdb[A_j[m]])\leqslant \sum_{m=1}^{j}rev(\sbp[B_j[m]])$ where $j=\mu$. 

\end{proof}
\bibliography{waoa19}

\end{document}